\newenvironment{claimproof}{\noindent\textit{Proof.}}{\hfill\ensuremath{\blacksquare}
	\medskip}
\newcounter{algcounter}
\newenvironment{example}[1][]{
	\bigskip\noindent
	\refstepcounter{algcounter}
	\textbf{Example \thealgcounter}
    \ifx&#1&
	\else
	(#1).
	\fi 
	}{\hfill\ensuremath{\lrcorner}\\\medskip}
\numberwithin{algcounter}{section}
\DeclareMathOperator{\dist}{d}
\DeclareMathOperator{\mind}{\delta_{\dist}}
\DeclareMathOperator{\minsome}{\delta}
\DeclareMathOperator{\ord}{ord}
\DeclareMathOperator{\At}{At}
\DeclareMathOperator{\Sing}{Sing}
\DeclareMathOperator{\wid}{wd}
\DeclareMathOperator{\bw}{bw}
\DeclareMathOperator{\tn}{tn}
\newcommand{\equivalent}{equivalent}
\newcommand{\mindistftk}{maximum linkage function}
\newcommand{\MinDistFtk}{Maximum Linkage Function}
\title{Tangles and Hierarchical Clustering}
\date{\today}
\author{Eva Fluck
	\thanks{RWTH Aachen, Chair for Computer Sience 7, Logic and Theory of Discrete Systems (\email{fluck@cs.rwth-aachen.de})}
}
\begin{document}

\maketitle

\begin{abstract}
We establish a connection between tangles, a concept from structural graph theory that plays a central role in Robertson and Seymour's graph minor project, and hierarchical clustering.
Tangles cannot only be defined for graphs, but in fact for arbitrary connectivity functions, which are functions defined on the subsets of some finite universe.
In typical clustering applications these universes consist of points in some metric space.
Connectivity functions are usually required to be submodular.
It is our first contribution to show that the central duality theorem connecting tangles with hierarchical decompositions (so-called branch decompositions) also holds if submodularity is replaced by a different property that we call maximum-submodular.
We then define a connectivity function on finite data sets in an arbitrary metric space and prove that its tangles are in one-to-one correspondence with the clusters obtained by applying the well-known single linkage clustering algorithms to the same data set.
Lastly we generalize this correspondence for any hierarchical clustering.
We show that the data structure that represents hierarchical clustering results, called dendograms, are equivalent to maximum-submodular connectivity functions and their tangles.
The idea of viewing tangles as clusters has first been proposed by Diestel and Whittle in 2016 as an approach to image segmentation.
To the best of our knowledge, our result is the first that establishes a precise technical connection between tangles and clusters.
\footnote{A conference version that contains some of the results has appeared at MFCS 2019 \cite{fluck2019}.}
\end{abstract}

\begin{keywords}
	Tangles, Branch Decomposition, Duality, Hierarchical Clustering, Single Linkage
\end{keywords}

\begin{AMS}
	05C40, 62H30, 68R10
\end{AMS}

\section{Introduction}

Connectivity in graphs and connectivity systems is a widely studied topic in discrete mathematics, e.g. \cite{minorsX,geelen,Diestel_Oum_2014,diestel2016tangles,diestel2016}.
On the other hand clustering, is an important and well studied topic in Data Science, e.g. \cite{carlsson2010characterization,dasgupta16hcoptimization,addad2017objective,kleinberg2003impossibility,luxburg2007spectral}.
Clustering is the umbrella term for different techniques to define sets of data points that are very similar to each other and not so similar to the data points contained in other sets.
We study the connection between both concepts by interpreting similarity as connectivity, thus two points are highly connected if their data is very similar and two sets are highly connected if they contain similar data points.
Both communities will benefit from such a connection, as it opens up a basis for a wide range of new results.
For example connectivity systems provide us with witnesses for the absence of highly connected regions, which is not yet established for clusters, as well as tree like representations of all those highly connected regions.
Additionally there is large variety of efficient algorithms to compute or approximate different kind of clusters, which can possibly be used to find algorithms for computing highly connected regions in connectivity systems.

Connectivity systems consist of a finite universe and a set function on subsets of the universe.
The set function is interpreted as the connectivity between a subset and its complement.
These so called connectivity functions are symmetric and submodular.
The concept is based on the notion of connectivity in graphs.
A prominent example is vertex connectivity.
There the universe is the set of all edges and the value of the vertex connectivity function is the number of vertices that are incident both to the subset and its complement.
In the context of connectivity systems two complementary questions are of interest \cite{grohe2016}:

\begin{enumerate}
	\item What are the highly connected regions of the universe?
	\item How can we decompose the universe along low order separations?
\end{enumerate}

Intuitively a highly connected region is a region of the universe such that no low order cut can cut through the region.
Thus for every low order separation of the universe there is one side of the separation that contains the region, or at least a large part of it.
A tangle is a concept that describes such highly connected regions in an indirect way via the low order separations.
For every low order separation the tangle describes on which side of the separation the large part of the region can be found.
Nevertheless, a separation may cut off small parts of the region.
In this way, for any single point it is not clearly defined whether it is part of the region or not.
On the other hand, the orientation has to be consistent, meaning that all orientations have to point towards the same region.

The second question is addressed by branch decompositions.
They consist of a ternary tree and a mapping of the elements to the leafs of the tree.
Then, the edges of the tree represent separations of the universe.
The width of such a decomposition is the largest value of any such separation.
Both concepts have been introduced on graphs by Robertson and Seymour \cite{minorsX}.
An overview of branch decompositions and tangles for integer-valued functions, as well as their connection can be found in a survey of Grohe \cite{grohe2016}, which can be translated to real-valued functions.
Branch decompositions and tangles address opposite questions, but they are dual for submodular connectivity functions.
There is a tangle of a certain value if and only if there is no branch decomposition of smaller value.
This duality result has first been shown on graphs in \cite{minorsX}.
There have been other set functions, that are not submodular, for which duality has been shown.
For example Adler et al. \cite{adler2007} have shown duality up to a constant factor for so called hypertangle number and hyperbranch width in hypergraphs.
Diestel and Oum \cite{Diestel_Oum_2014} developed a general duality theorem in combinatorial structures.

Besides tangles, there is another approach to identify highly similar (connected) regions, called clustering.
We consider hierarchical clustering algorithms as introduced by Carlson and Mémoli \cite{carlsson2010characterization}.
The basis of such an algorithm is a function that describes the distance between two sets.
Single linkage clustering for example considers the distance of two sets to be the smallest distance between any element from one set to any element from the other set.
The hierarchical clustering algorithm then merges the sets with the smallest distance, assigning the resulting partition this distance as its value.
We allow more than one merge at once, if the distances are equal.
The resulting sequence of partitions is represented by a so called dendogram.
The resulting sequence can also be represented by an ultrametric space.
In fact dendograms and ultrametrics are equivalent.
Carlson and Mémoli \cite{carlsson2010characterization} also showed that ultrametrics are stable under the single linkage algorithm.
A question arises: How do clustering and tangles relate?
A first approach towards this question was done by Diestel and Whittle \cite{diestel2016}, where they analyzed tangles in digital images as a way to describe the meaningful parts of the image.
Recently Elbracht et al. \cite{elbfioklekneirendteelux20} have successfully applied tangles to different clustering scenarios.

\subsection{Results}

To find a correspondence between tangles and clustering one of the main tasks is to find connectivity functions, or functions with similar properties, that represent the different clustering methods.
Our main result is, that we are able to specify a function that is correspondent to single linkage hierarchical clustering.
Recall that in single linkage the sets with the smallest distance between points belonging to different sets are chosen.
For an arbitrary metric $\dist\colon U\times U\rightarrow \mathbb{R}$, the corresponding set function is the \emph{\mindistftk} $\mind\colon 2^U\rightarrow \mathbb{R}$, defined by 
\begin{equation*}
\mind(X)=\max_{x\in X,y\notin X}\exp(-\dist(x,y)),
\end{equation*}
for all $X\in 2^U\setminus\{\emptyset,U\}$ and $\mind(\emptyset)=\mind(U)=0$.
We show that this function is not a classical connectivity function, since it is not submodular.
Therefore we introduce a new property, that we call maximum-submodularity.
A formal introduction of the \mindistftk\ and maximum-submodular functions is given in \cref{sec_mind}.

Our first theorem proves duality between branch decompositions and tangles of maximum-submodular functions.
\begin{theorem}
	Let $U$ be a finite set and let $\phi$ be a maximum-submodular connectivity function.
	The maximum order of a tangle of $\phi$ equals the minimum order of a branch decomposition of $\phi$.
	The existence of one is witness to the non-existence of the other.
	\label{theo:duality-informal}
\end{theorem}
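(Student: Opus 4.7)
The plan is to establish the two inequalities separately, following the structure of the classical submodular duality proof of Robertson--Seymour and Diestel--Oum, but rechecking every appeal to submodularity.

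For the lower bound direction, given a tangle $\tau$ of order $k$ and any branch decomposition $(T,\mu)$, I would show that some edge of $T$ induces a separation of order at least $k$. The argument is the usual one: orient each edge of $T$ so that the chosen side is the one lying in $\tau$; the consistency axioms of tangles guarantee this is well-defined and produces an orientation without a sink. Following this orientation one eventually reaches a leaf, and since leaves correspond to singletons while no singleton-oriented separation sits in a tangle of order $k$, somewhere along the way an edge must carry a separation of order at least $k$. This direction uses only the consistency axioms of tangles and does not touch the submodularity-like property of $\phi$ at all.

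The second direction, that absence of a large tangle forces a small-width decomposition, is the substantive part. I would adapt the inductive construction used by Robertson--Seymour and formalized abstractly by Diestel--Oum: pick a low-order separation $(A,B)$ of the universe, recurse on each side, and glue the resulting subtrees at a new root. For this to work, the recursion step must combine separations of the two sides with the original separation $(A,B)$ without blowing up the order, and the restriction of $\phi$ to each side must continue to have the property under which duality is proven.

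The main obstacle, and where maximum-submodularity must carry the weight that submodularity did before, is the corner lemma. In the submodular proof, if $(A,B)$ and $(C,D)$ are separations of order less than $k$ then the inequality $\phi(A\cap C)+\phi(A\cup C)\le \phi(A)+\phi(C)$ together with symmetry shows that at least one of the four corner separations has order less than $k$, and in fact the minimum of the two opposite corners is bounded by the average of the two inputs. Under maximum-submodularity the expected replacement is $\min(\phi(A\cap C),\phi(A\cup C))\le \max(\phi(A),\phi(C))$; this is exactly the form needed to push the induction through, because the corner we elect to use in the recursion has order bounded by the maximum of the two input orders, hence stays below $k$. I would then verify that the Diestel--Oum abstract duality framework can be instantiated with this weaker corner inequality in place of the additive one.

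The hard part will be auditing the classical proof for any hidden use of the additive form: for example, arguments that bound the width of a large conjunction of separations by summing contributions, or arguments that restrict $\phi$ to one side of a separation and need to preserve the structural property. I would need to check that such restrictions inherit maximum-submodularity and that every ``corner'' step in the construction of the decomposition can be replaced by a ``max-corner'' step without loss. Once these two points are settled, the duality theorem follows along classical lines.
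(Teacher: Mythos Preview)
Your plan matches the paper's approach: the paper also observes that the classical duality proof (as presented in Grohe's survey) uses submodularity only in the step that converts a pre-decomposition into an exact decomposition of no greater width---its Exactness Lemma, which is precisely the analogue of Diestel--Oum shifting---and then verifies that lemma directly from maximum-submodularity, after which the remainder of the argument goes through verbatim. Two small corrections: in the easy direction the tree orientation \emph{has} a sink, and that sink (either a leaf violating (T.3) or an internal node violating (T.2)) yields the contradiction; and maximum-submodularity actually gives the stronger inequality $\max(\phi(A\cap C),\phi(A\cup C))\le\max(\phi(A),\phi(C))$, bounding \emph{both} corners---the paper's Exactness Lemma needs a specific corner at each shifting step, so your stated $\min$-form alone would not obviously suffice.
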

A formal version of this theorem and its proof are shown in \cref{sec_dual}.
This duality is a key result in the theory of tangles of connectivity systems and suggests that the chosen property on the functions results in a similarly deep theory.
It allows us to use maximum-submodular connectivity functions to establish a connection between tangles and clustering.

Our second main result says that the tangles of the \mindistftk\ are in one-to-one correspondence with the resulting dendogram of single linkage hierarchical clustering.
For every non-singular set contained in a partition of the dendogram we find a distinct $\mind$-tangle and vice versa.
The technical notions appearing in the statement of the theorem will be explained later in this paper.
\begin{theorem}
	Let $(U,\dist)$ be a metric space.
	\begin{enumerate}
		\item For every $r\in\mathbb{R}$ and every cluster $\mathcal{B}$ of the dendogram resulting from single linkage with $|\mathcal{B}|>1$, \begin{equation*}
		\mathcal{T}\coloneqq\{X\subseteq U \mid \mind(X)<\exp(-r),~\mathcal{B}\subseteq X\}
		\end{equation*}
		is a $\mind$-tangle of $U$ of order $\exp(-r)$.
		\item For every $\mind$-tangle $\mathcal{T}$ of $U$ of order $k$ we can identify a cluster $\mathcal{B}$ of the dendogram resulting from single linkage with $|\mathcal{B}|>1$ such that
		\begin{equation*}
		\mathcal{T}=\{X\subseteq U \mid \mind(X)<k,~\mathcal{B}\subseteq X\}.
		\end{equation*}
	\end{enumerate}
	\label{theo:cluster-tangle-informal}
\end{theorem}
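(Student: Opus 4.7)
The plan is to translate both sides of the correspondence through the auxiliary graph $G_r$ on $U$ whose edges are the pairs $\{x,y\}$ with $\dist(x,y)\le r$. The key observation is that for $X\subsetneq U$, one has $\mind(X)<\exp(-r)$ if and only if no edge of $G_r$ crosses the cut $(X,U\setminus X)$, which in turn holds exactly when $X$ is a union of connected components of $G_r$. Since single linkage hierarchical clustering at level $r$ produces precisely the connected components of $G_r$ as its clusters, every clustering-theoretic question will reduce to a statement about unions of these components.

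For the first part, let $\mathcal{B}$ be a cluster of size at least two at level $r$, so $\mathcal{B}$ is an entire component of $G_r$. To verify that $\mathcal{T}$ is a tangle I would check orientation and consistency. \emph{Orientation}: for each $X$ with $\mind(X)<\exp(-r)$, the observation above says $X$ is a union of $G_r$-components; since $\mathcal{B}$ is a single component it lies entirely inside either $X$ or $U\setminus X$, so exactly one of them is placed in $\mathcal{T}$. \emph{Consistency}: for any triple $X_1,X_2,X_3\in\mathcal{T}$, the common intersection contains $\mathcal{B}$ and is therefore non-empty. The hypothesis $|\mathcal{B}|>1$ enters to rule out the degenerate case in which $\mathcal{T}$ would point at a singleton, typically excluded by the tangle definition.

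For the second part, let $C_1,\dots,C_m$ be the components of $G_r$; I would show that there is a unique index $i^\star$ with $C_{i^\star}\in\mathcal{T}$ and set $\mathcal{B}\coloneqq C_{i^\star}$. Uniqueness follows because distinct $C_i,C_j$ are disjoint, and having both in $\mathcal{T}$ violates the triple-intersection axiom via the triple $(C_i,C_i,C_j)$. Existence is the delicate point: if every $U\setminus C_i$ lay in $\mathcal{T}$, then applying the components observation to the union $C_1\cup\dots\cup C_j$ (still separated from its complement by distance $>r$, hence oriented by $\mathcal{T}$) and iterating the triple-intersection axiom, I would inductively show $U\setminus(C_1\cup\dots\cup C_j)\in\mathcal{T}$ for every $j<m$; at $j=m-1$ this yields $C_m\in\mathcal{T}$, contradicting the assumption. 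Having identified $\mathcal{B}$, the claimed equality $\mathcal{T}=\{X:\mind(X)<k,\,\mathcal{B}\subseteq X\}$ then follows from the components observation, exactly as in the first part.

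The main obstacle will be this existence direction of Part 2: establishing that a $\mind$-tangle of order $\exp(-r)$ must genuinely point at some component of $G_r$, rather than ``spread out'' across all of them via their complements. This is where the interplay between the components observation (which controls \emph{which} sets are oriented at all) and the triple-intersection tangle axiom (which constrains the \emph{consistency} of those orientations) needs the most careful handling. The remaining bookkeeping — the correspondence between the parameter $r$, the tangle order $k=\exp(-r)$, and the levels of the dendogram, and the translation from orientations to containments of $\mathcal{B}$ — is routine once the pivotal step has been carried out.
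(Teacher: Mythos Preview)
Your proposal is correct and follows essentially the same route as the paper: both hinge on the observation that sets with $\mind(X)<\exp(-r)$ are precisely the unions of connected components of $G_r$, together with the Carlsson--M\'emoli characterization of single-linkage clusters at level $r$ as exactly these components. The paper packages your existence argument for Part~2 into a general lemma about tangles of maximum-submodular functions (closure under intersection plus a peeling argument down to a singleton), but the underlying combinatorics are the same as your inductive use of (T.1) and (T.2) on the sets $U\setminus(C_1\cup\dots\cup C_j)$.
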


But we can go beyond single linkage clustering.
We find a one-to-one correspondence between dendograms and tangles of maximum-submodular connectivity functions, where the output is restricted to $[0,1)$\footnote{The restriction to outputs in $[0,1)$ stems from the choice of the order-reversing bijection $\exp(-r)$. Choosing a order reversing bijection $f\colon\mathbb{R}\rightarrow\mathbb{R}$ would give an equivalence for arbitrary maximum-submodular connectivity function.}.
This means that (the tangles of) maximum-submodular connectivity functions are equivalent to dendograms and thereby arbitrary hierarchical clustering algorithms.
The second statement of the theorem was first proposed by Nathan Bowler \cite{bowler}.
Again all technical notions appearing in the statement of the theorem will be explained later in this paper.
\begin{theorem}
	Let $U$ be some finite universe.
	\begin{enumerate}
		\item For every dendogram $\theta\colon[0,\infty)\rightarrow\mathcal{P}(U)$ there is a maximum-submodular connectivity function $\kappa_\theta\colon 2^U\rightarrow [0,1)$ such that, for every $r\in\mathbb{R}$ and every cluster $\mathcal{B}\in\theta(r)$ with $|\mathcal{B}|>1$, \begin{equation*}
		\mathcal{T}\coloneqq\{X\subseteq U \mid \kappa_\theta(X)<\exp(-r),~\mathcal{B}\subseteq X\}
		\end{equation*}
		is a $\kappa_\theta$-tangle of $U$ of order $\exp(-r)$.
		\item For every maximum-submodular connectivity function $\kappa\colon 2^U\rightarrow [0,1)$ there is a dendogram $\theta_\kappa\colon[0,\infty)\rightarrow\mathcal{P}(U)$ such that, for every $\kappa$-tangle $\mathcal{T}$ of $U$ of order $k$ we can identify a cluster $\mathcal{B}\in\theta_\kappa(-\ln(k))$ with $|\mathcal{B}|>1$ such that
		\begin{equation*}
		\mathcal{T}=\{X\subseteq U \mid \kappa(X)<k,~\mathcal{B}\subseteq X\}.
		\end{equation*}
	\end{enumerate}
	\label{theo:dendogram-tangle-informal}
\end{theorem}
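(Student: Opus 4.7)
The proof splits into two constructions. For Part 1, the plan is to define, for each proper non-empty $X\subseteq U$, the value $\kappa_\theta(X)$ as the supremum of $\exp(-r)$ over all levels $r$ at which some cluster $\mathcal{B}\in\theta(r)$ is \emph{split} by $X$, meaning $\mathcal{B}\cap X\neq\emptyset$ and $\mathcal{B}\setminus X\neq\emptyset$; set $\kappa_\theta(\emptyset)=\kappa_\theta(U)=0$. Symmetry is immediate and the values lie in $[0,1)$ because $\theta(0)$ is the singleton partition, which cannot be split. For maximum-submodularity I would argue that if $\mathcal{B}\in\theta(r)$ splits $X\cap Y$, then witnesses $a\in\mathcal{B}\cap X\cap Y$ and $b\in\mathcal{B}\setminus(X\cap Y)$ show $\mathcal{B}$ splits $X$ or $Y$; the same argument handles $X\cup Y$. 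This yields $\max(\kappa_\theta(X\cap Y),\kappa_\theta(X\cup Y))\leq\max(\kappa_\theta(X),\kappa_\theta(Y))$. The tangle claim then follows because every $X$ with $\kappa_\theta(X)<\exp(-r)$ splits no cluster in $\theta(r)$, so each $\mathcal{B}\in\theta(r)$ is either contained in $X$ or disjoint from it, and the tangle axioms for the given $\mathcal{T}$ reduce to a straightforward case analysis over the clusters.

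For Part 2, the plan is to recover the dendogram directly from the tangle structure of $\kappa$. I would define $\theta_\kappa(r)$ as the set of equivalence classes of the relation $x\approx_{\exp(-r)} y$ that holds iff no $X\subseteq U$ with $\kappa(X)<\exp(-r)$ separates $x$ from $y$; transitivity is a one-line case analysis. As $r$ grows the threshold decreases and the partition coarsens; at $r=0$ every distinct pair is separated by some $X$ (since $\kappa<1$), so $\theta_\kappa(0)$ is the singleton partition, while for $r$ large enough that no $X$ has value below $\exp(-r)$ we obtain $\theta_\kappa(r)=\{U\}$, making $\theta_\kappa$ a dendogram. Given a $\kappa$-tangle $\mathcal{T}$ of order $k=\exp(-r)$, I would identify the associated cluster $\mathcal{B}_\mathcal{T}$ as the equivalence class of $\approx_k$ obtained from $\bigcap\mathcal{T}$ after discarding singletons $\{x\}$ with $\kappa(\{x\})\geq k$ whose points in fact belong to other classes; the identity $\mathcal{T}=\{X:\kappa(X)<k,\mathcal{B}_\mathcal{T}\subseteq X\}$ should then follow from the tangle orientations.

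The principal obstacle is expected to be in Part 2: establishing that the extraction map is a bijection between $\kappa$-tangles of order $k$ and $\approx_k$-classes of size greater than one. One direction, that every non-trivial class $\mathcal{B}$ yields a tangle via $\{X:\kappa(X)<k,\mathcal{B}\subseteq X\}$, reduces to the same case analysis used in Part 1, now applied to $\theta_\kappa$. The other direction, that every tangle arises from a unique class, requires a careful analysis of how the tangle orients the basic separations $\{\{x\},U\setminus\{x\}\}$ and appeals to the duality result \cref{theo:duality-informal} to ensure that no branch decomposition of smaller width obstructs the construction; here maximum-submodularity will be crucial to rule out a tangle whose orientations point simultaneously at more than one equivalence class.
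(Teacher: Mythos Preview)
Your Part~1 is correct and in fact defines the same function as the paper, just from a different angle: your $\kappa_\theta$ coincides with the maximum linkage function $\delta_u$ of the ultrametric $u=\Psi(\theta)$, since a block of $\theta(r)$ is split by $X$ precisely when some pair $x\in X$, $y\notin X$ has $u(x,y)\le r$. The paper obtains this function by passing to $u$ and invoking \cref{theo:cluster-tangle} for the metric space $(U,u)$; your direct verification of maximum-submodularity and of the tangle axioms is a legitimate and slightly more elementary shortcut.

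For Part~2 your dendogram $\theta_\kappa$ is also the right object---it is exactly the single-linkage dendogram of the ultrametric $u_\kappa(x,y)=-\log\min\{\kappa(X):x\in X,\ y\notin X\}$ that the paper constructs. The gap is in your extraction of a cluster from a tangle. The appeal to \cref{theo:duality-informal} is a red herring: duality tells you that a tangle of order $k$ exists iff no branch decomposition of width $<k$ exists, but it gives no handle on \emph{which} equivalence class a given tangle points to, and the paper does not use it here. What you actually need is the intersection argument behind \cref{lem:tangle-mind} and \cref{cor_tangles-con-components}: by maximum-submodularity $\mathcal{T}$ is closed under intersection, so $\bigcap\mathcal{T}\in\mathcal{T}$; every $X\in\mathcal{T}$ is a union of $\approx_k$-classes; and each class $C$ satisfies $\kappa(C)<k$ (write $C=\bigcap_{y\notin C}X_{x,y}$ for separators $X_{x,y}$ and apply maximum-submodularity again). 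It follows that $\bigcap\mathcal{T}$ is a single class of size $\ge 2$, with no ``discarding of singletons'' required. The paper packages the same reasoning by proving $\delta_{u_\kappa}=\kappa$ and then citing \cref{theo:cluster-tangle}. Finally, your claim that $\theta_\kappa(r)=\{U\}$ for large $r$ tacitly assumes $\kappa(X)>0$ for all nontrivial $X$; the paper makes this hypothesis explicit in its formal statement.
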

An introduction to hierarchical clustering and the formal statements and proofs of \cref{theo:cluster-tangle-informal} and \cref{theo:dendogram-tangle-informal} can be found in \cref{sec_hc}.

This is to the best of our knowledge the first precise technical connection between tangles and clusters.
\footnote{A conference version that contains some of the results has appeared in \cite{fluck2019}.}

\section{Tangles and Branch Decompositions}

In our definitions we follow \cite{grohe2016}.
Our goal is to describe connectivity within some data set $U$.
Therefore we define set functions, that aim to describe how strong the connection is between a set and its complement.
For some finite universe \(U\) and any subset \(X\subseteq U\), we write \(\overline{X}\coloneqq U\setminus X\) for the complement.
We say a function $\kappa$ is \emph{normalized} if $\kappa(\emptyset)=0$, $\kappa$ is \emph{symmetric} if $\kappa(X)=\kappa(\overline{X})$, for all $X\subseteq U$ and $\kappa$ is \emph{submodular} if $\kappa(X)+\kappa(Y) \geq \kappa(X\cap Y)+\kappa(X\cup Y)$, for all $X,Y\subseteq U$.
A set function that is normalized, symmetric and submodular is called \emph{submodular connectivity function}.

\begin{example}[see \cite{grohe2016}]
	Let $G=(V,E)$ be a graph with vertex weights $w_V\colon V\rightarrow\mathbb{R}$.
	For any edge set \(X\subseteq E\) we write \(\Delta(X)\) to denote all vertices that are incident to both an edge in \(X\) and in  \(\overline{X}\).
	The weighted vertex-connectivity function $\nu\colon 2^E\rightarrow\mathbb{R}$, defined as
	\begin{equation*}
	\nu(X) \coloneqq \sum_{v \in \Delta(X)} w_V(v),
	\end{equation*}
	is a submodular connectivity function.
	\label{ex:vertex-con}
\end{example}

We introduce a different type of function, that also describes connectivity.
To show that this type has similar properties, we first take a look at basic concepts from the theory of connectivity systems.
Most of these concepts have only been studied for integer-valued functions, but for our needs all properties are translatable to real-valued functions.
We start with a formal definition of tangles, which are a way to describe highly connected regions.

\begin{definition}
	Let $\kappa$ be a symmetric set function on the universe $U$.
	A \emph{$\kappa$-tangle} of order $\ord(\mathcal{T})=k\geq0$ is a set $\mathcal{T}\subseteq 2^U$ such that:
	\begin{description}
		\item[\textbf{T.0}] $\kappa(X)<k$ for all $X\in \mathcal{T}$,
		\item[\textbf{T.1}] for all $X\subseteq U$ with $\kappa(X)<k$, either $X\in \mathcal{T}$ or $\overline{X}\in \mathcal{T}$ holds,
		\item[\textbf{T.2}] $X_1\cap X_2\cap X_3\neq\emptyset$ for all $X_1,X_2,X_3\in \mathcal{T}$ and
		\item[\textbf{T.3}] $\{x\}\notin\mathcal{T}$ for all $x\in U$.
	\end{description}
	\label{def:tangle}
\end{definition}

We define the \emph{tangle number} $\tn(\kappa)$ of a symmetric set function $\kappa$ to be the largest possible order for which we can still define a $\kappa$-tangle.

We use the following well-known lemma, which states that tangles are in a way closed under intersection and supersets.

\begin{lemma}[see \cite{grohe2016}]
	Let $\mathcal{T}$ be a $\kappa$-tangle of order $k$.
	Then it holds that 
	\begin{enumerate}
		\item for all $X\in\mathcal{T}$ and all $Y\supseteq X$, if $\kappa(Y)<k$ then $Y\in\mathcal{T}$ and
		\item for all $X,Y\in\mathcal{T}$, if $\kappa(X\cap Y)<k$ then $X\cap Y\in\mathcal{T}$.
	\end{enumerate}
	\label{lem:tangle-super-inter}
\end{lemma}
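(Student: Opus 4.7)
The plan is to prove both items by contradiction, using axioms T.1 and T.2 in tandem. In each case I suppose the set claimed to lie in $\mathcal{T}$ actually does not; because its $\kappa$-value is below $k$ and $\kappa$ is symmetric, T.1 forces the complement into $\mathcal{T}$; then T.2, applied to a suitably chosen triple of elements of $\mathcal{T}$, produces a triple intersection that must be non-empty but is manifestly empty, yielding the contradiction.

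Concretely, for item~1 I pick $X \in \mathcal{T}$ and $Y \supseteq X$ with $\kappa(Y) < k$, and suppose towards a contradiction that $Y \notin \mathcal{T}$. Symmetry of $\kappa$ gives $\kappa(\overline{Y}) = \kappa(Y) < k$, so T.1 places $\overline{Y}$ in $\mathcal{T}$. Applying T.2 to the (not necessarily distinct) triple $X, X, \overline{Y} \in \mathcal{T}$ yields $X \cap \overline{Y} \neq \emptyset$, which flatly contradicts the hypothesis $X \subseteq Y$. For item~2 I take $X, Y \in \mathcal{T}$ with $\kappa(X \cap Y) < k$ and suppose $X \cap Y \notin \mathcal{T}$. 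Symmetry again gives $\kappa(\overline{X \cap Y}) < k$, so T.1 places $\overline{X \cap Y} = \overline{X} \cup \overline{Y}$ into $\mathcal{T}$. Applying T.2 to the triple $X, Y, \overline{X} \cup \overline{Y}$ and distributing the intersection reduces to $(X \cap Y \cap \overline{X}) \cup (X \cap Y \cap \overline{Y}) = \emptyset$, the desired contradiction.

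I do not foresee any real obstacle here: the argument is essentially a direct unfolding of the tangle axioms from \cref{def:tangle}. The one point worth flagging is that symmetry of $\kappa$, part of the standing hypothesis on the set functions for which tangles are defined, is what allows us to translate each instance of $\kappa(Z) < k$ into $\kappa(\overline{Z}) < k$ before invoking T.1; without symmetry the argument would stall at precisely that step.
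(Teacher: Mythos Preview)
Your argument is correct and is exactly the standard proof: use T.1 (together with symmetry of $\kappa$) to force the complement into $\mathcal{T}$, then apply T.2 to obtain an empty triple intersection. The paper itself does not give a proof of this lemma but merely cites \cite{grohe2016}, where precisely this argument appears; your proposal matches it.
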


Before we continue to define an object dual to tangles, we need some notation from graph theory.
An undirected graph \(G=(V(G),E(G))\) is a pair of vertices \(V(G)\) and edges \(E(G)\subseteq V(G)\times V(G)\) where \(E(G)\) is symmetric, that is \( (u,v)\in E(G)\) if and only if \( (v,u)\in E(G)\).
We write \((u,v)\in E(G)\) to address both orientations of an edge simultaneously and if we want to address the orientations separately, we write \(\overrightarrow{E}(G)\) instead of \(E(G)\).
If \(G\) is clear from the context we write \(V,E\) instead of \(V(G),E(G)\).
Let \(v\in V\) be an arbitrary vertex.
We write \(N(v)\coloneqq\{u\mid (u,v)\in E\}\) to denote the neighborhood of \(v\).
An undirected tree is a graph \(T=(V,E)\) that does not contain cycles.
By \(L(T)\) we denote the set of all leaves of \(T\), that is \(L(T) \{ v\in V(T) \mid |N(v)|=1\}\).
A ternary tree is a tree \(T=(V,E)\) where all inner vertices have degree three, thus \(|N(V)|=3\), for all \(v\in V\setminus L(T)\).

A different way to describe connectivity in a universe is given by branch decompositions.
Here we do not look for highly connected regions, but ask ourselves how we can separate the universe into its single elements, using only separations of small value.
The following definition is based on branch decompositions as introduced in \cite{minorsX}, but we find it useful to introduce some generalizations, see also \cite{grohe2016}.

\begin{definition}
	Let $U$ be a finite set.
	\begin{itemize}
		\item A \emph{pre-decomposition} of $U$ is a pair $(T,\gamma)$ consisting of a ternary (undirected) tree $T=(E(T),V(T))$ and a mapping $\gamma\colon\overrightarrow{E}(T)\rightarrow2^U$, from the set of directed edges of \(T\) to subsets of \(U\), such that
		\begin{itemize}
			\item $\gamma(t,u)=\overline{\gamma(u,t)}$, for all $(t,u)\in\overrightarrow{E}(T)$, and
			\item $\gamma(s,u_1)\cup\gamma(s,u_2)\cup\gamma(s,u_3)=U$, for all internal nodes $s\in V(T)$ with $N(s)=\{u_1,u_2,u_3\}$.
		\end{itemize}
		\item For leaves $\ell\in L(T)$ with neighbor $N(\ell)=\{u\}$, we write $\gamma(\ell)$ instead of $\gamma(u,\ell)$.
		We call the $\gamma(\ell)$ \emph{atoms} and define $\At(T,\gamma)\coloneqq\{\gamma(\ell)\mid\ell\in L(T) \}$.
		\item A pre-decomposition is \emph{complete} if $|\gamma(\ell)|=1$, for all leaves $\ell\in L(T)$.
		\item A pre-decomposition is \emph{exact} at an internal node $t\in V(T)$ with $N(t)=\{u_1,u_2,u_3\}$ if all $\gamma(t,u_i)$ are mutually disjoint.
		\item A \emph{decomposition} is a pre-decomposition that is exact at all internal nodes.
		\item A \emph{branch decomposition} is a complete decomposition.
		\item Let $\kappa$ be a set function on $U$.
		The \emph{width} of a pre-decomposition $(T,\gamma)$ is
		\begin{equation*}
		\wid(T,\gamma)\coloneqq\max\{\kappa(\gamma(t,u))\mid(t,u)\in\overrightarrow{E}(T) \}.
		\end{equation*}
	\end{itemize}
	\label{def:undir-decomp}
\end{definition}

We define the \emph{branch width} $\bw(\kappa)$ of a symmetric set function $\kappa$ to be the smallest possible width $\wid(T,\gamma)$ of any branch decomposition $(T,\gamma)$ on $U$.
An example of pre-decompositions, exactness and completeness is shown in \cref{fig:decompositions}.
A decomposition can always be represented only by its atoms.
The mapping on every edge then contains the union of all atoms, that the edge points towards, that is \[\gamma(s,t)=\bigcup_{\substack{\ell\in L(T)\\\text{unique path from }\ell\text{ to }t\text{ does not contain }s}} \gamma(\ell). \]
For submodular connectivity functions it is known that every pre-decomposition can be transformed into a decomposition of the at most same width such that every atom of the decomposition is a subset of some atom of the pre-decomposition.
This implies that a complete pre-decomposition can be transformed into a branch decomposition, as one can easily prune leafs where the atoms are empty.

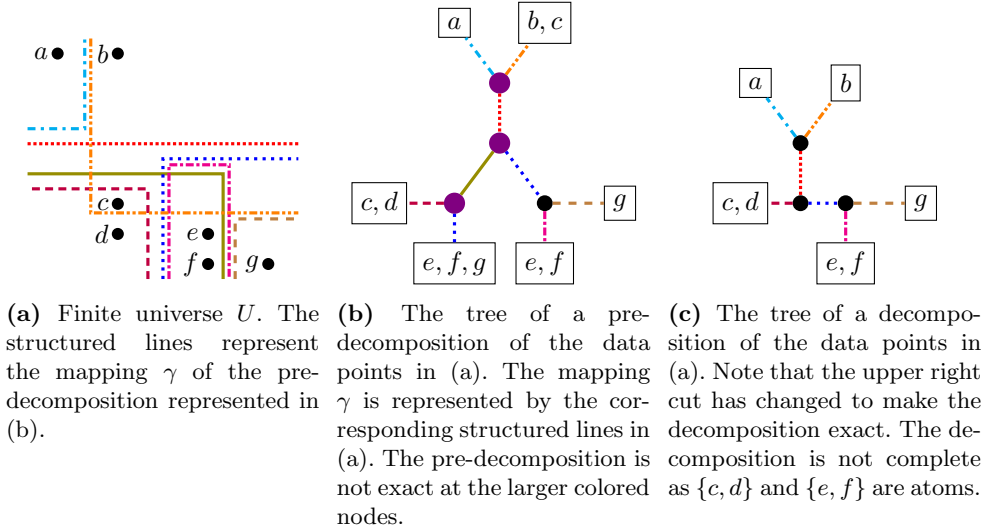
\begin{figure}[tbhp]
	\begin{subfigure}[t]{0.32\textwidth}
		\centering
		\begin{tikzpicture}[scale=0.4]
		\node (A) at (0,7) {};
		\fill (A) circle (0.2) node [left] {$a$};
		\node (B) at (2,7) {};
		\fill (B) circle (0.2) node [left] {$b$};
		\node (C) at (2,2) {};
		\fill (C) circle (0.2) node [left] {$c$};
		\node (D) at (2,1) {};
		\fill (D) circle (0.2) node [left] {$d$};
		\node (E) at (5,1) {};
		\fill (E) circle (0.2) node [left] {$e$};
		\node (F) at (5,0) {};
		\fill (F) circle (0.2) node [left] {$f$};
		\node (G) at (7,0) {};
		\fill (G) circle (0.2) node [left] {$g$};
		\path[draw, very thick, red, densely dotted] (-1,4) -- (8,4);
		\path[draw, very thick, olive] (-1,3) -- (5.5,3) -- (5.5,-0.5);
		\path[draw, very thick, blue, dotted] (3.5,-0.5) -- (3.5,3.5) -- (8,3.5);
		\path[draw, very thick, purple, densely dashed] (3,-0.5) -- (3,2.5) -- (-1,2.5);
		\path[draw, very thick, magenta, densely dashdotted] (3.7,-0.5) -- (3.7,3.3) -- (5.7,3.3) -- (5.7,-0.5);
		\path[draw, very thick, brown, dashed] (5.9,-0.5) -- (5.9,1.5) -- (8,1.5);
		\path[draw, very thick, cyan, dashdotted] (-1,4.5) -- (0.9,4.5) -- (0.9,7.5);
		\path[draw, very thick, orange, densely dashdotdotted] (1.1,7.5) -- (1.1,1.7) -- (8,1.7);
		
		\end{tikzpicture}
		\caption{Finite universe \(U\). The structured lines represent the mapping \(\gamma\) of the pre-decomposition represented in (b).}
		\label{fig:pre-dec-data}
	\end{subfigure}
	~
	\begin{subfigure}[t]{0.32\textwidth}
		\centering
		\begin{tikzpicture}[vertex/.style={fill,circle,inner sep=0pt, minimum size=6pt, outer sep=0pt},badvertex/.style={fill,circle,violet,inner sep=0pt, minimum size=8pt, outer sep=0pt},leaf/.style={draw,rectangle},scale=0.4]
		\node[badvertex] (a) at (0,0) {};
		\node[badvertex] (b) at (0,2) {};
		\node[vertex] (c) at (1.5,-2) {};
		\node[badvertex] (d) at (-1.5,-2) {};
		\node[leaf] (e) at (-4,-2) {$c,d$};
		\node[leaf] (f) at (-1.5,-4) {$e,f,g$};
		\node[leaf] (g) at (4,-2) {$g$};
		\node[leaf] (h) at (1.5,-4) {$e,f$};
		\node[leaf] (k) at (1.5,4) {$b,c$};
		\node[leaf] (l) at (-1.5,4) {$a$};
		\path[draw, very thick, red, densely dotted] (b) -- (a);
		\path[draw, very thick, blue, dotted] (a) -- (c);
		\path[draw, very thick, olive] (a) -- (d);
		\path[draw, very thick, purple, densely dashed] (d) -- (e);
		\path[draw, very thick, blue, dotted] (d) -- (f);
		\path[draw, very thick, brown, dashed] (c) -- (g);
		\path[draw, very thick, magenta, densely dashdotted] (c) -- (h);
		\path[draw, very thick, orange, densely dashdotdotted] (b) -- (k);
		\path[draw, very thick, cyan, dashdotted] (b) -- (l);
		\end{tikzpicture}
		\caption{The tree of a pre-decomposition of the data points in (a). The mapping \(\gamma\) is represented by the corresponding structured lines in (a). The pre-decomposition is not exact at the larger colored nodes.}
		\label{fig:pre-dec}
	\end{subfigure}
	~
	\begin{subfigure}[t]{0.32\textwidth}
		\centering
		\begin{tikzpicture}[vertex/.style={fill,circle,inner sep=0pt, minimum size=6pt, outer sep=0pt},leaf/.style={draw,rectangle},scale=0.4]
		\node[vertex] (a) at (0,0) {};
		\node[vertex] (b) at (0,2) {};
		\node[vertex] (c) at (1.5,0) {};
		\node[leaf] (e) at (-2,0) {$c,d$};
		\node[leaf] (g) at (4,0) {$g$};
		\node[leaf] (h) at (1.5,-2) {$e,f$};
		\node[leaf] (k) at (1.5,4) {$b$};
		\node[leaf] (l) at (-1.5,4) {$a$};
		\path[draw, very thick, red, densely dotted] (b) -- (a);
		\path[draw, very thick, blue, dotted] (a) -- (c);
		\path[draw, very thick, purple, densely dashed] (a) -- (e);
		\path[draw, very thick, brown, dashed] (c) -- (g);
		\path[draw, very thick, magenta, densely dashdotted] (c) -- (h);
		\path[draw, very thick, orange, densely dashdotdotted] (b) -- (k);
		\path[draw, very thick, cyan, dashdotted] (b) -- (l);
		\end{tikzpicture}
		\caption{The tree of a decomposition of the data points in (a). Note that the upper right cut has changed to make the decomposition exact. The decomposition is not complete as \(\{c,d\}\) and \(\{e,f\}\) are atoms.}
		\label{fig:dec}
	\end{subfigure}
	\caption{An example of a pre-decomposition and a decomposition, that is a pre-decomposition that is exact at every vertex. To transform the decomposition in (c) into a branch decomposition, one would need to add two children each to the leafs containing \(\{c,d\}\) and \(\{e,f\}\) separating the sets into singletons.}
	\label{fig:decompositions}
\end{figure}

For submodular connectivity functions duality between branch decompositions and tangles has been proven.
Robertson and Seymour \cite{minorsX} were the first to discover this duality for the unweighted vertex connectivity function as in \hyperref[ex:vertex-con]{Example~\ref*{ex:vertex-con}}\footnote{The definitions by Robertson and Seymour \cite{minorsX} are slightly different but equivalent up to some small issues with tangles of order \(\leq 2\) and isolated vertices or edges with at least one endvertex of degree 1, see \cite{grohe2016}.}.
Duality between branch decompositions and tangles states that a branch decomposition of a certain width is a witness for the non-existence of a tangle of any larger order and vice versa.

\begin{theorem}
	Let \(\kappa\) be a submodular connectivity function. It holds that 
	\begin{equation*}
	\tn(\kappa)=\bw(\kappa).
	\end{equation*}
	\label{theo:duality-submod_informal}
\end{theorem}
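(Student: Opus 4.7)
The plan is to prove the two inequalities $\tn(\kappa)\le\bw(\kappa)$ and $\tn(\kappa)\ge\bw(\kappa)$ separately, following the classical Robertson--Seymour strategy as presented in Grohe's survey. The equality $\tn(\kappa)=\bw(\kappa)$ is then just the conjunction, and the ``witness'' phrasing in \cref{theo:duality-informal} is an immediate corollary.

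For the easy direction $\tn(\kappa)\le\bw(\kappa)$, suppose $(T,\gamma)$ is a branch decomposition with $\wid(T,\gamma)<k$ and $\mathcal{T}$ is a $\kappa$-tangle of order $k$. For every directed edge $(s,t)\in\overrightarrow{E}(T)$ we have $\kappa(\gamma(s,t))<k$, so by \textbf{T.1} exactly one of $\gamma(s,t),\gamma(t,s)$ lies in $\mathcal{T}$; I orient each (undirected) tree edge toward the endpoint whose incoming directed edge is in $\mathcal{T}$. Since $T$ is finite, following this orientation must reach a sink $v$. If $v$ is internal with $N(v)=\{u_1,u_2,u_3\}$, then $\gamma(u_i,v)\in\mathcal{T}$ for $i=1,2,3$, but by the covering axiom $\gamma(v,u_1)\cup\gamma(v,u_2)\cup\gamma(v,u_3)=U$ forces $\gamma(u_1,v)\cap\gamma(u_2,v)\cap\gamma(u_3,v)=\emptyset$, contradicting \textbf{T.2}. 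If $v$ is a leaf, then $\gamma(v)\in\mathcal{T}$ is a singleton (by completeness), contradicting \textbf{T.3}.

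For the hard direction $\tn(\kappa)\ge\bw(\kappa)$, I would argue contrapositively: if every branch decomposition has width $\ge k$, then a $\kappa$-tangle of order $k$ exists. I would follow the Diestel--Oum abstract separation-systems framework, instantiated to the separation system $S_{<k}\coloneqq\{X\subseteq U\mid \kappa(X)<k\}$. Submodularity of $\kappa$ gives that $S_{<k}$ is closed under the ``corner'' operations: if $\kappa(X),\kappa(Y)<k$, then $\min(\kappa(X\cap Y),\kappa(X\cup Y))<k$, which makes $S_{<k}$ a submodular separation system in the sense of Diestel--Oum. Their abstract duality theorem then yields either a profile/tangle orienting $S_{<k}$ consistently (avoiding the ``stars'' corresponding to \textbf{T.2} and \textbf{T.3}) or a nested family of separations certifying a branch decomposition of width $<k$; a standard tree-of-tangles construction converts a maximal nested family into an actual branch decomposition via \cref{def:undir-decomp}, where the fact that pre-decompositions can be made exact and complete without increasing width (the fact quoted just after \cref{def:undir-decomp}) is used to clean up the result.

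The main obstacle is the hard direction: constructing a branch decomposition from the mere absence of a tangle requires a genuine combinatorial argument, and the crucial use of submodularity happens exactly in the corner-closure step of $S_{<k}$ — this is the property that the next section generalises to ``maximum-submodularity''. Once that abstract duality is invoked, the passage from an abstract tree of separations to an honest branch decomposition is routine bookkeeping on the ternary tree $T$, and the easy direction above takes care of the reverse inequality.
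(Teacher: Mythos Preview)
The paper does not actually prove \cref{theo:duality-submod_informal}; it is stated as a known result (Robertson--Seymour, and in the form used here, Grohe's survey), so there is no in-paper proof to compare against directly. What the paper \emph{does} prove is the analogous statement for maximum-submodular functions (\cref{theo:dual-sub-bound} and \cref{cor_tangle-vs-bw-sub-bound}), and it remarks explicitly that this argument is lifted verbatim from Grohe's proof of the submodular case, with submodularity used only inside the \hyperref[lem:dir-exactness]{Exactness Lemma}. That is the natural benchmark.

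Your easy direction is exactly the paper's forward direction of \cref{theo:dual-sub-bound}. For the hard direction, the paper's route is \emph{not} via the Diestel--Oum abstract framework. It runs a direct induction on the number of sets $X$ with $\kappa(X)<k$ and $X,\overline{X}\notin\mathcal{A}$: in the base case the canonical orientation $\mathcal{Y}=\{\overline{X}\mid X\in\mathcal{A},\ \kappa(X)<k\}$ must violate \textbf{T.2} or \textbf{T.3}, and that violation is literally a three- or two-leaf pre-decomposition; in the inductive step one picks a minimal bad $X'$, enlarges $\mathcal{A}$ by $2^{X'}$ respectively $2^{\overline{X'}}$, obtains two pre-decompositions by induction, and splices copies of one into the other along $X'$. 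Submodularity enters only through the \hyperref[lem:dir-exactness]{Exactness Lemma}, which turns the glued pre-decomposition into an honest decomposition without increasing width. This is more elementary and self-contained than invoking Diestel--Oum, and it is precisely the argument the paper later reuses for maximum-submodular $\kappa$.

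Your Diestel--Oum outline is a legitimate alternative in principle, but as written it conflates two different objects: the abstract duality theorem produces an $S$-tree over a family $\mathcal{F}$ of forbidden stars, not a ``maximal nested family'' that one then feeds into a ``tree-of-tangles construction'' (the tree of tangles is a different theorem with a different purpose). If you want to go this way, you need to specify the star family $\mathcal{F}$ encoding \textbf{T.2} and \textbf{T.3}, check that $\mathcal{F}$ is standard for the separation system, and then observe that an $S$-tree over $\mathcal{F}$ \emph{is} already a complete pre-decomposition of width $<k$; the remaining cleanup to a branch decomposition is again exactly the exactness/shifting step. So even on your route the work localises to the same lemma the paper isolates.
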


\section{The \MinDistFtk and Maximum-Submodularity}
\label{sec_mind}

In clustering applications the transformation $\exp(-c\cdot\operatorname{f}(x,y))$, for some constant $c$ and some function $\operatorname{f}$, is often used to transform a dissimilarity function $\operatorname{f}$ like a metric into a similarity function.
We use this transformation to define a connectivity function that captures the smallest distance between any point in a set and any point in the complement, that is $X$ is assigned a high value if there is a point outside of $X$ that is very close to a point in $X$.

\begin{definition}[\MinDistFtk]
	Let \(U\) be a finite data set and $\dist\colon U\times U\rightarrow \mathbb{R}$ be an arbitrary metric.
	The \emph{\mindistftk} $\mind\colon 2^U\rightarrow\mathbb{R}$, is defined as follows:
	\begin{equation*}
	\mind(X)\coloneqq\begin{cases}
	0 & \text{if } X=\emptyset \text{ or } \overline{X}=\emptyset,\\
	\max_{x\in X,x'\in\overline{X}}\exp(-\dist(x,x')) & \text{otherwise}.
	\end{cases}
	\end{equation*}
	\label{def_minDist}
\end{definition}

The \mindistftk\ is in general not submodular, as can be seen with a small example.
For an arbitrary $x\in\mathbb{R}^+$ define a one-dimensional universe containing only the following four points $a_1=x$, $a_2=x+1$, $a_3=-x$ and $a_4=-x-1$.
Let $X=\{a_1,a_2\}$, $Y=\{a_1,a_3\}$ and the metric $d(u,v)=|u-v|$ is the absolute of the difference.
Then $\mind(X)=\exp(-2x)<\mind(Y)=\mind(X\cap Y)=\mind(X\cup Y)=\exp(-1)$, for all $x>\frac{1}{2}$

We define a new property, that is similar to submodularity, which allows us to develop similar theories as for submodular connectivity functions.

\begin{definition}
	A set function $\kappa$ on a finite set $U$ is \emph{maximum-submodular} if, for all $X,Y\subseteq U$,
	\begin{equation*}
	\max(\kappa(X),\kappa(Y)) \geq \max(\kappa(X\cap Y),\kappa(X\cup Y)).
	\end{equation*}
\end{definition}

This property is neither a generalization of submodularity nor a specialization.
For instance $\nu$ as in \hyperref[ex:vertex-con]{Example~\ref*{ex:vertex-con}} is submodular but not maximum-submodular and in this section we see that the \mindistftk, which in general is not submodular, is maximum-submodular.
We call a normalized, symmetric and maximum-submodular set function \emph{maximum-submodular connectivity function}.

\begin{lemma}
	The \mindistftk\ is a maximum-submodular connectivity function.
\end{lemma}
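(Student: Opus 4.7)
The plan is to check the three defining properties of a maximum-submodular connectivity function directly from the definition. Normalization and symmetry are immediate: $\mind(\emptyset)=0$ by definition, and $\mind(X)=\mind(\overline{X})$ because the expression $\max_{x\in X, x'\in \overline{X}}\exp(-\dist(x,x'))$ is unchanged when the roles of $X$ and $\overline{X}$ are swapped (and the case $X\in\{\emptyset,U\}$ agrees on both sides).

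The real content is maximum-submodularity. I would fix $X,Y\subseteq U$ and argue by a short case analysis on which of $\mind(X\cap Y)$ and $\mind(X\cup Y)$ realises the right-hand maximum. If $X\cap Y\in\{\emptyset,U\}$ or $X\cup Y\in\{\emptyset,U\}$, the corresponding value is $0$ and can be discarded. Otherwise pick a witness pair. In the case that the maximum is $\mind(X\cap Y)=\exp(-\dist(x,x'))$ with $x\in X\cap Y$ and $x'\in \overline{X\cap Y}$, the element $x'$ must miss at least one of $X,Y$; say $x'\notin X$. Then $x\in X$, $x'\in\overline{X}$, and neither is $\emptyset$ nor $U$, so $\mind(X)\ge \exp(-\dist(x,x'))=\mind(X\cap Y)$. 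Symmetrically, if the maximum is $\mind(X\cup Y)=\exp(-\dist(x,x'))$ with $x\in X\cup Y$ and $x'\notin X\cup Y$, then $x$ lies in at least one of $X,Y$; say $x\in X$. Since $x'\notin X\cup Y$ we also have $x'\notin X$, and again $\mind(X)\ge\exp(-\dist(x,x'))=\mind(X\cup Y)$. Either way the larger of $\mind(X),\mind(Y)$ is an upper bound on the larger of $\mind(X\cap Y),\mind(X\cup Y)$.

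I do not expect a real obstacle here; the only thing that requires care is the bookkeeping around the degenerate sets $\emptyset$ and $U$, where the formula $\max_{x\in X, x'\in\overline{X}}\exp(-\dist(x,x'))$ has an empty domain. Isolating these as a preliminary subcase (where the inequality is trivial) keeps the main argument—the single observation that a witness on the intersection transfers to one of $X,Y$ via $x'$, and a witness on the union transfers via $x$—clean and essentially a one-liner.
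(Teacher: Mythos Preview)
Your proposal is correct and follows the same witness-based strategy as the paper. Your organization is in fact cleaner: the paper picks witnesses for both $\mind(X\cap Y)$ and $\mind(X\cup Y)$ simultaneously and runs a four-case analysis on where those witnesses land, whereas you observe that it suffices to bound whichever of the two realizes the right-hand maximum, reducing the argument to a single witness pair and a two-way split.
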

\begin{proof}
	The \mindistftk\ is normalized by definition and symmetric since metrics are symmetric.
	If $X$ or $Y$ are equal to $\emptyset$ or $U$, maximum-submodularity trivially holds as $\{X,Y\}=\{X\cup Y,X\cap Y \}$ in these cases.
	If \(X\cap Y=\emptyset\) then we only need to show \(\max(\mind(X),\mind(Y))\geq\mind(X \cup Y)\).
	If \(X \cup Y= U\) this trivially holds, thus assume \(\overline{X}\cap \overline{Y}\neq\emptyset\) and choose $u\in \overline{X}\cap \overline{Y}$ and $v\in{X\cup Y}$ such that $\exp(-\dist(u,v))=\mind(X\cup Y)$.
	As either \(v\in X\) or \(v\in Y\) holds, it follows that \(\max(\mind(X),\mind(Y))\geq \exp(-\dist(u,v))=\mind(X\cup Y)\).
	Lastly if \(X\cup Y,\overline{X}\cup \overline{Y}\neq\emptyset\), we choose $u\in X\cap Y$ and $v\in\overline{X\cap Y}$ such that $\exp(-\dist(u,v))=\mind(X\cap Y)$.
	Analogously we choose $u'\in \overline{X \cup Y}=\overline{X}\cap\overline{Y}$ and $v'\in X\cup Y$.
	Then w.l.o.g. we distinguish four cases, depending on $v$ and $v'$.
	
	\textbf{Case 1:} $v\in \overline{X}\cap\overline{Y}$ and $v'\in X\cap Y$ hold: Then w.l.o.g. $v=u'$ and $u=v'$ hold.
	Therefore $\mind(X\cap Y) = \mind(X\cup Y)$ and thus $\mind(X)\geq\mind(X\cap Y)$ and $\mind(Y)\geq\mind(X\cup Y)$ hold.
	
	\textbf{Case 2:} $v\in \overline{X}\cap\overline{Y}$ and $v'\in \overline{X}\cap Y$ hold: It follows that $\mind(X)\geq\mind(X\cap Y)$ and $\mind(Y)\geq\mind(X\cup Y)$ hold.
	
	\textbf{Case 3:} $v,v'\in \overline{X}\cap Y$ holds: It follows that $\mind(X)\geq\mind(X\cap Y)$ and $\mind(Y)\geq\mind(X\cup Y)$ hold.
	
	\textbf{Case 4:} $v\in X\cap\overline{Y}$ and $v'\in \overline{X}\cap Y$ hold: In this case it holds that $\mind(Y)\geq\mind(X\cap Y)$ and $\mind(Y) \geq \mind(X\cup Y)$.
	Therefore we have $\mind(Y) \geq \max( \mind(X\cap Y) , \mind(X\cup Y) )$ and the inequality holds.
	
	As all other cases are symmetric to the four cases shown above, the inequality holds for all $X,Y\subseteq U$.
\end{proof}

Next, we consider tangles of the Minimum Distance Function.
Firstly, we give an example of such a tangle.

\begin{example}
	Let $U\subset\mathbb{R}^n$ be a finite set of points.
	Let $x_1,x_2\in U$ be two points such that $\dist(x_1,x_2)=\min\{\dist(x,y)\mid x,y\in U, x\neq y \}$.
	Then, for every $k\leq\exp(-\dist(x_1,x_2))$,
	\begin{equation*}
	\mathcal{T}\coloneqq\{X\subseteq U \mid \mind(X)<k,~x_1,x_2\in X\}
	\end{equation*}
	is a $\mind$-tangle of order $k$.
	
	$\mathcal{T}$ satisfies (T.0), (T.2) and (T.3) by construction.
	To see that (T.1) is satisfied note that if $x_1\in X$ and $x_2\in\overline{X}$ holds then $\mind(X)=\exp(-\dist(x_1,x_2))\geq k$ holds.
	\label{ex:tangle-maxd}
\end{example}

Having this example we realize that the tangles described are the only $\mind$-tangles. In fact all tangles of a maximum-submodular connectivity function behave similarly.

\begin{lemma}
	Let $\kappa$ be a maximum-submodular connectivity function. Every $\kappa$-tangle of order $k$ is of the form described as in \hyperref[ex:tangle-maxd]{Example~\ref*{ex:tangle-maxd}}.
	That is, we can identify two points $u,v\in U$ such that for all $X\in\mathcal{T}$ we have $u,v\in X$. Especially if $\kappa=\mind$ we have $\exp(-\dist(u,v))\geq k$.
	\label{lem:tangle-mind}
\end{lemma}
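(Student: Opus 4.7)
The plan is to produce an inclusion-minimal element $Z\in\mathcal{T}$ by closure under intersection, observe that $|Z|\geq 2$, and then use the minimality of $Z$ together with the tangle axioms to locate the distinguished pair. Concretely, for any $X,Y\in\mathcal{T}$ the maximum-submodularity of $\kappa$ combined with $\kappa(X),\kappa(Y)<k$ gives $\kappa(X\cap Y)\leq\max(\kappa(X),\kappa(Y))<k$, so \cref{lem:tangle-super-inter} forces $X\cap Y\in\mathcal{T}$. Since $2^U$ is finite, iterating this shows $Z\coloneqq\bigcap_{X\in\mathcal{T}}X\in\mathcal{T}$, and $Z$ is then the unique inclusion-minimal tangle element. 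Axiom T.2 applied to three copies of $Z$ gives $Z\neq\emptyset$, and T.3 rules out $|Z|=1$, so $|Z|\geq 2$; any two points $u,v\in Z$ then satisfy $u,v\in X$ for all $X\in\mathcal{T}$, establishing the first claim.

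For the strengthening to $\kappa=\mind$ I plan to upgrade this pair to one whose distance meets the tangle order. The key intermediate step is: for every $u\in Z$ one has $\kappa(Z\setminus\{u\})\geq k$. Otherwise T.1 would place either $Z\setminus\{u\}$ or its complement $\overline{Z}\cup\{u\}$ in $\mathcal{T}$; the first contradicts the minimality of $Z$, while the second, combined with $Z\in\mathcal{T}$, yields via maximum-submodularity
\begin{equation*}
\kappa(\{u\})=\kappa\bigl(Z\cap(\overline{Z}\cup\{u\})\bigr)\leq\max\bigl(\kappa(Z),\kappa(\overline{Z}\cup\{u\})\bigr)<k,
\end{equation*}
so that \cref{lem:tangle-super-inter} forces $\{u\}\in\mathcal{T}$, contradicting T.3.

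With this bound in hand, unpacking the definition of $\mind$ for any fixed $u\in Z$ yields a witness pair $(x,y)$ with $x\in Z\setminus\{u\}$, $y\in\overline{Z}\cup\{u\}$, and $\exp(-\dist(x,y))\geq k$. The case $y\in\overline{Z}$ is impossible, because then the same pair would also witness $\mind(Z)\geq k$, contradicting $\mind(Z)<k$; hence $y=u$, and $v\coloneqq x$ together with $u$ provides the desired pair in $Z$ with $\exp(-\dist(u,v))\geq k$. The main obstacle I foresee is precisely the intermediate inequality $\kappa(Z\setminus\{u\})\geq k$: translating the structural closure properties of $\mathcal{T}$ into a concrete lower bound on an explicit set is the one place where maximum-submodularity, the minimality of $Z$, and all three of T.1, T.3, and \cref{lem:tangle-super-inter} must be combined simultaneously.
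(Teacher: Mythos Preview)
Your argument is correct and is more direct than the paper's. The paper introduces an auxiliary graph $G_k=(U,\kappa^k)$, where $\kappa^k$ consists of the pairs that no separation of order $<k$ can split, shows that every tangle element is a disjoint union of connected components of $G_k$, and then argues by contradiction: assuming no component of size $\geq 2$ lies in all tangle elements, it intersects away all large components and then peels off singletons until a singleton lands in $\mathcal{T}$. For the $\mind$ clause it identifies the edges of $G_k$ with pairs satisfying $\exp(-\dist(u,v))\geq k$ and (implicitly) takes an adjacent pair inside the distinguished component. Your route bypasses the graph entirely: closure under intersection hands you the minimal element $Z$ at once, and your clean lower bound $\kappa(Z\setminus\{u\})\geq k$ extracts the distance witness directly. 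What the paper's detour buys is the connected-component picture itself, which is immediately reused in \cref{cor_tangles-con-components} and later in the single-linkage correspondence of \cref{theo:cluster-tangle}; in fact your $Z$ coincides with the unique component $C$ isolated there, so your method could be extended to recover that corollary too.
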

\begin{proof}
	To prove this we define the relation $\kappa^k\coloneqq\{(u,v)\mid \min_{\substack{X\subseteq U\\ u\in X, v\in\overline{X}}} \kappa(X)\geq k \}$ and consider the graph $G_k\coloneqq(U,\kappa^k)$.
	For every set $X\subseteq U$ with $\kappa(X)<k$ and every connected component $C$ of $G_k$, holds either $V(C)\subseteq X$ or $V(C)\subseteq\overline{X}$ by definition.
	Thus for every $\kappa$-tangle $\mathcal{T}$ of order $\leq k$ it holds that all $X\in\mathcal{T}$ are disjoint unions of connected components of $G_k$.
	Additionally, using \cref{lem:tangle-super-inter} (2) every such $\mathcal{T}$ is closed under intersection, as for two sets $X,Y\in\mathcal{T}$ and $k>\max\{\kappa(X),\kappa(Y)\}\geq \max\{\kappa(X\cap Y),\kappa(X\cup Y)\}\geq \kappa(X\cap Y) $.
	Suppose for contradiction there is a $\kappa$-tangle $\mathcal{T}$ of order $k$ such that there is no connected component of $G_k$, of size at least two, that is contained in all $X\in\mathcal{T}$.
	Let $C_0,\ldots,C_n$ be an enumeration of all connected components of $G_k$ with $|C_i|\geq2$.
	Then we can identify a sequence $X_0,\ldots,X_n\in\mathcal{T}$ such that $V(C_i)\nsubseteq X_i$, thus $V(C_i)\cap X_i=\emptyset$.
	We set $Y_1\coloneqq X_0\cap X_1$ and $Y_{i+1}\coloneqq Y_i\cap X_{i+1}$.
	Since $\mathcal{T}$ is closed under intersection, we get $Y_1,\ldots,Y_n\in\mathcal{T}$.
	As $\mathcal{T}$ is a tangle, $|Y_n|>1$ has to hold.
	For every subset $Y\subseteq Y_n$ we have $\kappa(Y)<k$ as \(Y_n\) is contains only isolated vertices of $G_k\coloneqq(U,\kappa^k)$.
	Take an enumeration of all elements $y_1,\ldots,y_m\in Y_n$ and construct a series of sets $Z_1,\ldots,Z_\ell\in\mathcal{T}$ such that $|Z_\ell|=1$.
	Clearly this contradicts the existence of $\mathcal{T}$.
	If $\{y_1\}\in\mathcal{T}$ set $Z_1\coloneqq\{y_1\}$, else set $Z_1\coloneqq Y_n \backslash \{y_1\}=Y_n \cap \overline{\{y_1\}} \in \mathcal{T}$.
	If $|Z_i|=1$ set $\ell=i$ and stop the construction.
	Otherwise if $\{y_{i+1}\}\in\mathcal{T}$ set $Z_{i+1}\coloneqq\{y_{i+1}\}$, else set $Z_{i+1}\coloneqq Z_i \backslash \{y_{i+1}\}=Z_i \cap \overline{\{y_{i+1}\}} \in \mathcal{T}$.
	As $|Z_i|>|Z_{i+1}|$ this construction terminates and yields the desired contradiction.
	
	Lastly if $\kappa=\mind$ we have $\mind^k=\{(u,v)\mid \exp(-\dist(u,v))\geq k \}$ as $\min_{\substack{X\subseteq U\\ u\in X, v\in\overline{X}}}=\exp(-\dist(u,v))$.
	To see this we observe that for all $u,v\in U$, with $u\neq v$, and for all $X\subset U$, with $u\in X,~ v\in \overline{X}$, it holds that $\mind(X)\geq\exp(-\dist(u,v))$ by the definition of $\mind$.
	On the other hand let $U_{u,v}\coloneqq\{v'\mid \dist(u,v')<d(u,v) \}$, then $\mind(U_{u,v})\leq \exp(-\dist(u,v)))$.
\end{proof}

From this lemma an important corollary follows.
In \cref{sec_hc} we use this to identify for each tangle a cluster resulting from single linkage hierarchical clustering.

\begin{corollary}
	Let $\kappa$ be a maximum-submodular connectivity function.
	Let $\mathcal{T}$ be a $\kappa$-tangle of order $k$ over the universe $U$.
	There is a unique connected component $C$ of the graph $G=(U,\kappa^k)$, with $\kappa^k\coloneqq\{(u,v)\mid \min_{\substack{X\subseteq U\\ u\in X, v\in\overline{X}}}\geq k \}$, such that $C\subseteq X$, for all $X\in\mathcal{T}$.
	\label{cor_tangles-con-components}
\end{corollary}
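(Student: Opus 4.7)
The plan is to separate the statement into existence and uniqueness of the connected component $C$. For existence, almost all the work has already been done in the preceding lemma: its proof proceeds by contradiction from the assumption that no connected component of $G_k$ of size at least two is contained in every $X\in\mathcal{T}$, so the argument in fact establishes that such a component must exist. I would simply extract this component and take it as the claimed $C$, together with the two observations already used in the lemma's proof, namely that $\mathcal{T}$ is closed under intersection and that every $X\subseteq U$ with $\kappa(X)<k$ is a disjoint union of connected components of $G_k$.

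For uniqueness, I would argue by contradiction. Suppose $C_1\neq C_2$ are two distinct connected components of $G_k$ that are both contained in every $X\in\mathcal{T}$. Pick $u\in C_1$ and $v\in C_2$. Since $u$ and $v$ belong to different connected components of $G_k$, by definition of $\kappa^k$ there exists some $Y\subseteq U$ with $u\in Y$, $v\in\overline{Y}$ and $\kappa(Y)<k$. By the component observation from the lemma's proof, $Y$ is a disjoint union of connected components of $G_k$, so $C_1\subseteq Y$ and $C_2\subseteq\overline{Y}$. Applying tangle axiom (T.1) to $Y$, either $Y\in\mathcal{T}$ or $\overline{Y}\in\mathcal{T}$. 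The first option contradicts $C_2\subseteq X$ for all $X\in\mathcal{T}$, since $C_2\subseteq\overline{Y}$ forces $C_2\cap Y=\emptyset$; the second contradicts $C_1\subseteq X$ for all $X\in\mathcal{T}$ analogously.

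I do not expect a real obstacle here, because the core combinatorial work is absorbed by the preceding lemma. The only point worth being careful about is that uniqueness should be established among \emph{all} connected components of $G_k$, not only those of size at least two; the argument above accommodates this, since the separating set $Y$ is produced purely from the fact that $u$ and $v$ lie in different components, without any assumption on $|C_1|$ or $|C_2|$, and (T.1) depends only on $\kappa(Y)<k$.
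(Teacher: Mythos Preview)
Your proposal is correct and follows the same overall line as the paper: existence is read off directly from the preceding lemma, and uniqueness is obtained from axiom (T.1) together with the fact that every set of value below $k$ is a disjoint union of components of $G_k$. The only cosmetic difference is in the uniqueness step: the paper applies (T.1) to the component $C'$ itself (tacitly using that $\kappa(C')<k$) to get $C'\in\mathcal{T}$ and hence $C\subseteq C'$, whereas you apply (T.1) to a separator $Y$ obtained from the non-edge $(u,v)$; your variant has the small advantage that it does not require verifying $\kappa(C')<k$.
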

\begin{proof}
	We already showed that there exists some component $C$ such that $C\subseteq X$, for all $X\in\mathcal{T}$.
	Assume there is some component $C'\neq C$ such that $C'\subseteq X$, for all $X\in\mathcal{T}$.
	Then we have $C'\in\mathcal{T}$ and thus $C\subseteq C'$ which contradicts $C'\neq C$. 
\end{proof}

For most of the following applications of maximum-submodular functions it will be necessary to restrict oneself to functions which are strictly positive on every non-trivial set, that is $\kappa(X)>0$ for all $\emptyset\neq X \neq U$. We can do so as otherwise there is a canonical partition of $U$ into minimal subsets $\{V_1,\ldots,V_n\}$ with $\kappa(V_i)=0$, for $i\leq n$, such that $\kappa(X)=\max_{i\leq n}\kappa(X\cap V_i)$.

\section[Duality for Maximum-Submodular Functions]{Duality for Maximum-Submodular Functions\sectionmark{Duality}}
\sectionmark{Duality}
\label{sec_dual}

Now we prove duality for all maximum-submodular connectivity functions, thus also for the minimal distance function.
We achieve a result similar to the theory for submodular connectivity functions, first shown in \cite{minorsX}.
To formulate the Duality Theorem we first need a definition.

\begin{definition}
	Let $\kappa$ be a symmetric set function on $U$ and $\mathcal{A}\subseteq 2^U$.
	\begin{itemize}
		\item A pre-decomposition $(T,\gamma)$ is \emph{over} $\mathcal{A}$ if $At(T,\gamma)\subseteq\mathcal{A}$.
		\item A $\kappa$-tangle $\mathcal{T}$ \emph{avoids} $\mathcal{A}$ if $\mathcal{T}\cap\mathcal{A}=\emptyset$.
	\end{itemize}
	\label{def_over-avoid}
\end{definition}

The Duality Theorem states that there can not be any decomposition over a family of sets, if there is a tangle avoiding this family and vice versa.
The proof yields a construction of such a decomposition.
The following theorem is a precise formulation of \cref{theo:duality-informal}.

\begin{theorem}[Duality Theorem of Maximum-Submodular Functions]
	Let $\kappa$ be a maxi\-mum-submodular connectivity function on $U$.
	Let $\mathcal{A}\subseteq2^U$ such that $\mathcal{A}$ is closed under taking subsets and $\Sing(U)\subseteq\mathcal{A}$, where $\Sing(U)$ is the set of all singletons from $U$.
	Then there is a decomposition of width less than $k$ over $\mathcal{A}$ if and only if there is no $\kappa$-tangle of order $k$ that avoids $\mathcal{A}$.
	\label{theo:dual-sub-bound}
\end{theorem}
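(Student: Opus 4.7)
The plan is to leverage the characterization of $\kappa$-tangles already given by \cref{lem:tangle-mind} and \cref{cor_tangles-con-components}, together with the key consequence of maximum-submodularity that $\mathcal{F}_k\coloneqq\{X\subseteq U:\kappa(X)<k\}$ is closed under both intersection and union (since $\max(\kappa(X\cap Y),\kappa(X\cup Y))\leq\max(\kappa(X),\kappa(Y))$). These combine to give a bijection $C\mapsto\mathcal{T}_C\coloneqq\{X\in\mathcal{F}_k:C\subseteq X\}$ between connected components $C$ of $G_k=(U,\kappa^k)$ with $|C|\geq 2$ and the $\kappa$-tangles of order $k$. A short argument using intersection closure (to build, for each $u\in C$, a set $S_u\in\mathcal{F}_k$ with $u\in S_u\subseteq C$) and then union closure ($C=\bigcup_{u\in C}S_u\in\mathcal{F}_k$) shows that each such component itself satisfies $\kappa(C)<k$.

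Because $\mathcal{A}$ is closed under subsets, $\mathcal{T}_C$ avoids $\mathcal{A}$ if and only if $C\notin\mathcal{A}$; together with $\Sing(U)\subseteq\mathcal{A}$, this reduces the theorem to the equivalence: a decomposition of width less than $k$ over $\mathcal{A}$ exists if and only if every connected component of $G_k$ lies in $\mathcal{A}$. For the forward direction I would argue that, in a decomposition $(T,\gamma)$ of width less than $k$, no internal edge can separate two $G_k$-adjacent vertices, as such an edge would yield a set in $\mathcal{F}_k$ containing one but not the other, contradicting $(u,v)\in\kappa^k$; hence each component lies inside a single atom of $(T,\gamma)$, and subset-closure of $\mathcal{A}$ puts the component itself in $\mathcal{A}$. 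For the reverse direction I would construct a decomposition directly: enumerate the components $C_1,\ldots,C_m$ (all in $\mathcal{A}$ by assumption), fix a ternary tree with $m$ leaves, set $\gamma(\ell_i)\coloneqq C_i$, and extend $\gamma$ along internal edges in the forced consistent way. Each internal separation is a disjoint union of components and thus lies in $\mathcal{F}_k$ by repeated union closure, so the width is strictly less than $k$; atoms are in $\mathcal{A}$ by construction, and exactness follows from disjointness of the $C_i$.

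The main obstacle, and the single place where maximum-submodularity is used in an essential way, is the preliminary claim $\kappa(C)<k$ for each connected component $C$ of $G_k$, which requires both intersection and union closure of $\mathcal{F}_k$ simultaneously. After that, both directions of the equivalence are short. A small technicality to handle separately is the degenerate case in which $G_k$ has at most two components: there a ``ternary tree with $m$ leaves'' has no internal node in the usual sense, but the construction still produces a valid (trivial) decomposition with a vacuous width condition.
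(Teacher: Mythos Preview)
Your argument is correct, but it follows a genuinely different route from the paper's proof. The paper establishes the forward direction by the standard orientation argument (orient each edge of the decomposition toward the side lying in the tangle, locate a sink, and derive a contradiction from (T.2) or (T.3)); for the backward direction it runs an induction on the number of sets $X$ with $\kappa(X)<k$ and $X,\overline X\notin\mathcal A$, gluing together two inductively obtained pre-decompositions over $\mathcal A\cup 2^{X'}$ and $\mathcal A\cup 2^{\overline{X'}}$, and then invokes the Exactness Lemma (\cref{lem:dir-exactness}) to turn the resulting pre-decomposition into a decomposition. In the paper's proof maximum-submodularity enters \emph{only} through the Exactness Lemma; everything else is the generic Robertson--Seymour/Grohe framework.

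Your approach instead exploits the structural classification of maximum-submodular tangles already proved in \cref{lem:tangle-mind} and \cref{cor_tangles-con-components}: you reduce the whole statement to ``every connected component of $G_k$ lies in $\mathcal A$'' and then build the decomposition in one shot with the components themselves as atoms. This is shorter and more transparent in the maximum-submodular setting, yields a canonical decomposition, and bypasses the Exactness Lemma entirely (exactness is automatic because the components partition $U$). The trade-off is that your proof is tied to the specific component structure that maximum-submodularity provides, whereas the paper's argument deliberately mirrors the submodular case and isolates the Exactness Lemma as the single place where the property of $\kappa$ matters --- precisely the modular viewpoint the paper stresses in its conclusion when it asks which conditions on $\kappa$ suffice for duality.
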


Assuming the theorem holds, we can directly derive the following corollary using that every branch decomposition of $U$ is complete, thus is over $\Sing(U)$ and every $\kappa$-tangle avoids $\Sing(U)$ by definition.

\begin{corollary}
	Let $\kappa$ be a maximum-submodular connectivity function on $U$.
	It holds that
	\begin{equation*}
	\tn(\kappa)=\bw(\kappa).
	\end{equation*}
	\label{cor_tangle-vs-bw-sub-bound}
\end{corollary}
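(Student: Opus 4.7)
The plan is to apply the Duality Theorem (\cref{theo:dual-sub-bound}) with the specific choice $\mathcal{A}=\Sing(U)\cup\{\emptyset\}$ (adding $\emptyset$ merely to satisfy the literal subset-closure hypothesis; this adjustment is inessential). This $\mathcal{A}$ trivially contains $\Sing(U)$ and is closed under taking subsets, so the hypotheses of \cref{theo:dual-sub-bound} are satisfied, and the theorem yields the equivalence: there exists a decomposition of width less than $k$ over $\mathcal{A}$ if and only if there is no $\kappa$-tangle of order $k$ avoiding $\mathcal{A}$.

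Two simple translations then connect this $\mathcal{A}$-parameterised statement to the numerical equality. First, a decomposition of $U$ over $\mathcal{A}$ is essentially a branch decomposition: every atom lies in $\Sing(U)\cup\{\emptyset\}$, so the non-empty atoms are singletons, and any empty-atom leaves can be pruned from the ternary tree without increasing the width, exactly as remarked after \cref{def:undir-decomp}. Conversely, every branch decomposition is automatically over $\Sing(U)\subseteq\mathcal{A}$, since completeness forces each atom to be a singleton. Hence the minimum width of a decomposition over $\mathcal{A}$ coincides with $\bw(\kappa)$. Second, every $\kappa$-tangle avoids $\mathcal{A}$: singletons are excluded by (T.3), and $\emptyset\in\mathcal{T}$ would violate (T.2) by taking $X_1=X_2=X_3=\emptyset$. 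Therefore the maximum order of a $\kappa$-tangle avoiding $\mathcal{A}$ coincides with $\tn(\kappa)$.

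With these translations in hand, the Duality Theorem reads: for every $k\geq 0$, a branch decomposition of width less than $k$ exists iff no $\kappa$-tangle of order $k$ exists. It then remains to convert this biconditional into the equality $\bw(\kappa)=\tn(\kappa)$. Because $U$ is finite, $\kappa$ takes only finitely many values, so $\bw(\kappa)$ is attained by some branch decomposition and $\tn(\kappa)$ by some tangle; thus ``$\bw(\kappa)<k$'' is equivalent to ``there exists a branch decomposition of width $<k$''. For tangles, I use the standard monotonicity observation: if $\mathcal{T}$ is a $\kappa$-tangle of order $k'\geq k$, then $\mathcal{T}'\coloneqq\{X\in\mathcal{T}:\kappa(X)<k\}$ is a $\kappa$-tangle of order $k$ (axioms (T.0), (T.2), (T.3) are inherited, and (T.1) at order $k$ follows from (T.1) for $\mathcal{T}$ at order $k'$ together with the symmetry $\kappa(X)=\kappa(\overline{X})$). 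Consequently ``$\tn(\kappa)\geq k$'' is equivalent to ``there exists a $\kappa$-tangle of order $k$''. Plugging both equivalences into the biconditional gives $\bw(\kappa)<k\Leftrightarrow\tn(\kappa)<k$ for every $k$, which forces $\bw(\kappa)=\tn(\kappa)$.

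The whole derivation is essentially bookkeeping once \cref{theo:dual-sub-bound} is granted, so there is no serious obstacle. The only point that requires any care is the strict-versus-non-strict inequality bookkeeping: one must ensure the ``$<k$'' appearing in the Duality Theorem lines up correctly with the extrema defining $\bw$ and $\tn$, which is exactly why the attainment of both extrema on the finite universe $U$ is invoked.
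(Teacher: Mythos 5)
Your proof is correct and follows essentially the same route as the paper, which likewise derives the corollary by applying \cref{theo:dual-sub-bound} with $\mathcal{A}$ consisting of the singletons, noting that every branch decomposition is over $\Sing(U)$ and every $\kappa$-tangle avoids $\Sing(U)$ by (T.3). Your additions --- adjoining $\emptyset$ to make $\mathcal{A}$ literally subset-closed, pruning empty atoms, the restriction argument showing a tangle of order $k'\geq k$ yields one of order $k$, and the attainment of the two extrema --- are just careful bookkeeping of steps the paper leaves implicit.
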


Looking at the proof of duality for submodular connectivity functions as it is presented in \cite{grohe2016}, we see that they do not use any properties of the set function, besides symmetry and a transformation from a pre-decomposition into a decomposition of equal width.
Therefore, we can adapt that proof if we are able to do a similar transformation.
The following lemma shows how to achieve exactness at every node of a pre-decomposition.

\begin{lemma}[Exactness Lemma]
	Let $\kappa$ be a maximum-submodular connectivity function on $U$ and $(T,\gamma)$ be a pre-decomposition of $U$.
	Then there is a mapping $\gamma'\colon \overrightarrow{E}(T)\rightarrow 2^U$ such that $(T,\gamma')$ is a decomposition of $U$ satisfying
	\begin{itemize}
		\item $\wid(T,\gamma')\leq\wid(T,\gamma)$ and
		\item $\gamma'(\ell)\subseteq\gamma(\ell)$, for all leaves $\ell\in L(T)$.
	\end{itemize}
	\label{lem:dir-exactness}
\end{lemma}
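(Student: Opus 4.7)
The plan is to root $T$ at an arbitrary leaf and process the internal nodes in a single top-down sweep; at each internal node I will shrink the two ``downward'' outgoing sets so that, together with the already-fixed ``upward'' set towards the parent, they partition $U$. The one tool I really need is the following ``boolean closure'' consequence of maximum-submodularity and symmetry: for any $X, Y \subseteq U$,
\begin{equation*}
\max\bigl(\kappa(X \cap Y),\ \kappa(X \cup Y),\ \kappa(X \setminus Y)\bigr) \ \leq\ \max\bigl(\kappa(X),\ \kappa(Y)\bigr),
\end{equation*}
where the bound on $\kappa(X\setminus Y) = \kappa(X \cap \overline{Y})$ uses $\kappa(\overline{Y}) = \kappa(Y)$.

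In detail, I pick a leaf $\ell_0 \in L(T)$ with unique neighbor $t_0$, root $T$ at $\ell_0$, and set $\gamma'(t_0, \ell_0) \coloneqq \gamma(t_0, \ell_0)$. I then process the internal nodes in breadth-first order from $t_0$; when a node $v$ with parent $p$ and children $c_1, c_2$ comes up, I write $P \coloneqq \gamma'(v, p)$ (already defined) and $C_i \coloneqq \gamma(v, c_i)$, and define
\begin{equation*}
\gamma'(v, c_1) \coloneqq C_1 \setminus P, \qquad \gamma'(v, c_2) \coloneqq \overline{P \cup C_1},
\end{equation*}
with $\gamma'(c_i, v) \coloneqq \overline{\gamma'(v, c_i)}$ on the opposite orientations.

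There are then three things to verify. Exactness at $v$: the three sets $P$, $\gamma'(v, c_1)$, $\gamma'(v, c_2)$ are disjoint by construction and their union is $U$, because $P \cup C_1 \cup C_2 = U$ holds inductively, since $P = \overline{\gamma'(p, v)} \supseteq \overline{\gamma(p, v)} = \gamma(v, p)$ (we only shrink on the child side, so parent-side complements only grow) and the original pre-decomposition satisfies $\gamma(v, p) \cup C_1 \cup C_2 = U$. Atom inclusion: $\gamma'(\ell_0) = \gamma(\ell_0)$ trivially, and for any other leaf $\ell$ with parent $v$ the value $\gamma'(\ell) = \gamma'(v, \ell)$ is either $C_1 \setminus P \subseteq C_1$ or $\overline{P \cup C_1} \subseteq C_2$ (using $P \cup C_1 \cup C_2 = U$ again), so $\gamma'(\ell) \subseteq \gamma(\ell)$. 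Width bound: $\kappa(P) \leq \wid(T, \gamma)$ inductively along the sweep and $\kappa(C_i) \leq \wid(T, \gamma)$ from the original pre-decomposition, so the displayed inequality bounds $\kappa(C_1 \setminus P)$ and $\kappa(\overline{P \cup C_1}) = \kappa(P \cup C_1)$ by $\wid(T, \gamma)$.

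The main point to get right is the downward propagation: modifying $\gamma(v, c_i)$ changes the parent-side set at $c_i$ when $c_i$ is later processed, and one might worry this corrupts the structure further down the tree. The construction is designed so that this side only grows (because the child side shrinks), which both preserves the cover condition at $c_i$ and, by the boolean-closure inequality above, keeps $\kappa$-values bounded by $\wid(T, \gamma)$. A single top-down sweep therefore suffices to produce the required decomposition $(T, \gamma')$.
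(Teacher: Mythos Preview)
Your proof is correct and follows essentially the same approach as the paper's: root at a leaf, sweep the internal nodes by breadth-first search, and at each node shrink the two child-side sets so that together with the already-fixed parent-side set they partition $U$, using maximum-submodularity (plus symmetry) to bound $\kappa$ of the new sets. The only cosmetic difference is that the paper performs the shrink in two conditional substeps (first intersect both $Y_i$ with $\overline{X}$, then remove the overlap from $Y_1$), whereas you write the resulting sets $C_1\setminus P$ and $\overline{P\cup C_1}$ directly; up to swapping the roles of the two children these are the same formulas, and your explicit inductive invariant $\gamma'(p,v)\subseteq\gamma(p,v)$ is exactly what makes the cover condition $P\cup C_1\cup C_2=U$ go through.
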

\begin{proof}
	We iteratively construct $\gamma'$ from $\gamma$, keeping the invariants
	\begin{itemize}
		\item $\wid(T,\gamma')\leq\wid(T,\gamma)$,
		\item $\gamma'(s,t)\subseteq\gamma(s,t)$ or $\gamma'(s,t)\supseteq\gamma(s,t)$,for all edges \((s,t)\in\overrightarrow{E}(T)\), and
		\item $\gamma'(\ell)\subseteq\gamma(\ell)$, for all leaves $\ell\in L(T)$.
	\end{itemize}
	We pick an arbitrary leaf $\ell_{start}\in L(T)$ and set $\gamma'(\ell_{start},s)\coloneqq\gamma(\ell_{start},s)$ as well as $\gamma'(s,\ell_{start})\coloneqq\gamma(s,\ell_{start})$ for $s\in N(\ell_{start})$.
	If $T$ only consists of at most two nodes we are done, since $(T,\gamma')$ is already a decomposition.
	Otherwise, we traverse the tree with breadth-first search starting at $\ell_{start}$.
	If we reach a node $s\in V(T)\backslash L(T)$ with predecessor $t\in V(T)$, we do the following.
	Let $u_1,u_2\in N(s)$ be the successors of $s$ and define $X\coloneqq\gamma'(s,t)$ and $Y_i\coloneqq\gamma(s,u_i)$, for $i=1,2$.
	
	If $X\cap(Y_1\cup Y_2)\neq\emptyset$ we update $Y_i$ to $ Y_i\cap\overline{X}$, for $i=1,2$.
	This step is consistent with the invariants as $\kappa(Y_i\cap \overline{X})\leq\max(\kappa(Y_i\cap\overline{X}),\kappa(Y_i\cup\overline{X}))\leq\max(\kappa(Y_i),\kappa(X))$, where the second inequality holds due to maximum-submodularity, and $Y_i\cap\overline{X}\subseteq\gamma(s,u_i)$, for $i=1,2$.
	
	If $Y_1\cap Y_2\neq\emptyset$, update $Y_1$ to $Y_1\cap\overline{Y_2}$.
	This step is again consistent with the invariants as $\kappa(Y_1\cap \overline{Y_2})\leq\max(\kappa(Y_1\cap\overline{Y_2}),\kappa(Y_1\cup\overline{Y_2}))\leq\max(\kappa(Y_1),\kappa(Y_2))$, where the second inequality holds due to maximum-submodularity and symmetry, and $Y_1\cap\overline{Y_2}\subseteq\gamma(s,u_1)$.
	
	Set $\gamma'(s,u_i)\coloneqq Y_i$ and $\gamma'(u_i,s) \coloneqq \overline{\gamma'(s,u_i)}$, for $i=1,2$.
	After these steps we know that $\gamma'$ is exact at $s$ and we do not change $\gamma'$ for any predecessor of $s$.
	
	When we reach a leaf $\ell\in L(T)$, we do not change $\gamma'$ and continue with the next node in the breadth-first search.
	
	This construction yields the desired mapping. 
\end{proof}

The construction above may result in a tree with empty leafs.
But such a leaf can be easily removed by deleting it and its neighbor, connecting the resulting open edges.

Now, we are ready to prove the Duality Theorem for Maximum-Submodular Functions.

\begin{proof}[Proof of \cref{theo:dual-sub-bound}, see \cite{grohe2016}]
	For the forward direction, we let $(T,\gamma)$ be a decomposition of $U$ over $\mathcal{A}$ of width less than $k$.
	Suppose, for contradiction, $\mathcal{T}$ is a $\kappa$-tangle of order $k$ that avoids $\mathcal{A}$.
	We orient the edges $E(T)$ such that they point in the direction of the set that is contained in the tangle.
	Such a set always exists as the width is less then $k$.
	Thus, formally we orient $(s,t)\in E(T)$ towards $t$ if $\gamma(s,t) \in \mathcal{T}$, and towards $s$ if $\gamma(t,s) = \overline{\gamma(s,t)} \in \mathcal{T}$.
	As in every oriented tree, there is at least one node $t\in V(T)$ such that all edges incident to $t$ are oriented towards $t$.
	If $t\in L(T)$ then $\gamma(t)\in\mathcal{A}$ and $\gamma(t)\in\mathcal{T}$ which contradicts the assumption that $\mathcal{T}$ avoids $\mathcal{A}$.
	Thus, $t$ is an internal node with $N(t)=\{u_1,u_2,u_3\}$.
	But since all $\gamma(t,u_i)$ are mutually disjoint and all $\gamma(u_i,t)\in\mathcal{T}$ this contradicts (T.2) as $\gamma(t,u_1)\cup\gamma(t,u_2)\cup\gamma(t,u_3)=U$ and thus $\gamma(u_1,t)\cap\gamma(u_2,t)\cap\gamma(u_3,t)=\emptyset$.
	It follows that such a $\kappa$-tangle can not exist and the forward direction holds.
	
	For the backward direction assume there is no $\kappa$-tangle of order $k$ that avoids $\mathcal{A}$.
	We will construct a pre-decomposition $(T,\gamma)$ of $U$ over $\mathcal{A}$ of width less than $k$.
	Using the Exactness Lemma and since $\mathcal{A}$ is closed under taking subsets it follows that a decomposition of $U$ over $\mathcal{A}$ exists.
	
	We construct such a pre-decomposition $(T,\gamma)$ inductively on the number of sets $X\subseteq U$ with $\kappa(X)< k$ and neither $X\in\mathcal{A}$ nor $\overline{X}\in\mathcal{A}$.
	
	In the base case, for all $X\subseteq U$ with $\kappa(X)< k$, holds $X\in\mathcal{A}$ or $\overline{X}\in\mathcal{A}$.
	We define $\mathcal{Y}\coloneqq\{\overline{X}\mid X\in\mathcal{A} \text{ with } \kappa(X)<k \}$.
	We know that $\mathcal{Y}$ can not be a tangle, as we assumed that there is no tangle of order $k$.
	Since (T.0) and (T.1) hold by assumption on $\mathcal{A}$, either (T.2) or (T.3) have to be false.
	If $\mathcal{Y}$ violates (T.2) there are three sets $Y_1,Y_2,Y_3\in\mathcal{Y}$ such that $Y_1\cap Y_2\cap Y_3=\emptyset$.
	Then $\overline{Y_1},\overline{Y_2},\overline{Y_3}\in\mathcal{A}$ and $\overline{Y_1}\cup\overline{Y_2}\cup\overline{Y_3}=U$.
	We set $T\coloneqq(\{\ell_1,\ell_2,\ell_3,t\},\{(\ell_i,t)\mid i=1,2,3\})$, $\gamma(t,\ell_i)\coloneqq\overline{Y_i}\in\mathcal{A}$ and $\gamma(\ell_i,t)\coloneqq Y_i$.
	Then, $(T,\gamma)$ is a pre-decomposition of $U$ over $\mathcal{A}$.
	If $\mathcal{Y}$ violates (T.3) there is some $x\in U$ such that $\{x\}\in\mathcal{Y}$ and thus $\overline{\{x\}}\in\mathcal{A}$.
	Since $\Sing(U)\subseteq\mathcal{A}$ we have $\{x\}\in\mathcal{A}$.
	We take $T\coloneqq(\{s,t\},\{(s,t)\})$ and $\gamma(s,t)\coloneqq\{x\}$, $\gamma(t,s)\coloneqq\overline{\{x\}}$.
	Then $(T,\gamma)$ is a pre-decomposition of $U$ over $\mathcal{A}$.
	
	In the inductive step, we have some $X\subseteq U$ with $\kappa(X)< k$ and neither $X\in\mathcal{A}$ nor $\overline{X}\in\mathcal{A}$.
	We chose $X'$ such that $|X'|$ is minimal with respect to the conditions above.
	We set $\mathcal{A}^1\coloneqq\mathcal{A}\cup2^{X'}$ and $\mathcal{A}^2\coloneqq\mathcal{A}\cup2^{\overline{X'}}$.
	By the induction hypothesis there are pre-decompositions $(T^1,\gamma^1)$ over $\mathcal{A}^1$ and $(T^2,\gamma^2)$ over $\mathcal{A}^2$.
	If $\At(T_i,\gamma_i)\subseteq\mathcal{A}$ than $(T_i,\gamma_i)$ is a pre-decomposition over $\mathcal{A}$ and we are done.
	Otherwise, we can assume that $(T^1,\gamma^1)$ is a decomposition, due to the \hyperref[lem:dir-exactness]{Exactness Lemma~\ref*{lem:dir-exactness}}, thus the $\gamma^1(\ell)$ for all $\ell\in L(T^1)$ are unique.
	There is some $\ell^1\in L(T^1)$ such that $\gamma^1(\ell^1)\notin\mathcal{A}$.
	As the width of $(T^1,\gamma^1)$ is less  than $k$ and no true subset $X''\subset X'$ fulfills $\kappa(X'')< k$  and neither $X''\in\mathcal{A}$ nor $\overline{X''}\in\mathcal{A}$, we know that $\gamma^1(\ell^1)=X'$ and that it is the only leaf with this condition.
	We denote its neighbor by $s^1$.
	Let us now consider all $\ell^2_1,\ldots,\ell^2_m\in L(T^2)$ with $\gamma^2(\ell^2_i)\notin\mathcal{A}$.
	We know that, for all $\ell^2_i$, we have $\gamma^2(\ell_i^2)\subseteq \overline{X'}$. We consider all $s^2_i$ with $N(\ell^2_i)=\{s^2_i\}$.
	We modify $\gamma^2$ by setting $\gamma^2(s^2_i,\ell^2_i)\coloneqq\overline{X'}$ and $\gamma^2(\ell^2_i,s^2_i)\coloneqq X'$.
	The result will still be a pre-decomposition.
	Then, we construct a pre-decomposition $(T,\gamma)$ of $U$ over $\mathcal{A}$.
	We take $m$ disjoint copies $(T^1_i,\gamma^1_i)$ of $(T^1,\gamma^1)$.
	We define
	\begin{equation*}
	V(T)\coloneqq\bigcup_{1\leq i\leq m} V(T^1_i)\backslash\{\ell^1_i\} \cup V(T^2)\backslash\{\ell^2_1,\ldots,\ell^2_m\}
	\end{equation*}
	and take the union of all edge sets where $\ell^1_i$ is replaced by $s^2_i$ and $\ell^2_i$ is replaced by $s^1_i$.
	Then, we define $\gamma\colon E(T)\rightarrow2^U$ by
	\begin{equation*}
	\gamma(s,t)\coloneqq\begin{cases}
	X' & \text{if } (s,t)=(s^1_i,s^2_i) \text{ for some } 1\leq i\leq m,\\
	\overline{X'} & \text{if } (s,t)=(s^2_i,s^1_i) \text{ for some } 1\leq i\leq m,\\
	\gamma^1(s,t) & \text{if } s,t\in V(T^1_i) \text{ for some } 1\leq i\leq m,\\
	\gamma^2(s,t) & \text{if } s,t\in V(T^2).
	\end{cases}
	\end{equation*}
	Then, $(T,\gamma)$ is a pre-decomposition of $U$ over $\mathcal{A}$ of width less than $k$.
\end{proof}

\section{Maximum-Submodular Functions and Hierarchical Clustering}
\label{sec_hc}

To establish the connection between tangles and hierarchical clustering, we use Agglomerative Hierarchical Clustering via single linkage on dissimilarity inputs.
A dissimilarity input is an instance, where a small function value describes a large similarity between the points.
The result of a hierarchical clustering algorithm is a dendogram. For an arbitrary set $U$, $\mathcal{P}(U)$ denotes the \emph{set of all partitions} of $U$.

\begin{definition}[\cite{carlsson2010characterization}]
	A \emph{dendogram} over a finite set $U=\{x_1,\ldots,x_n\}$ is a function $\theta\colon[0,\infty)\rightarrow\mathcal{P}(U)$, satisfying the following conditions:
	\begin{enumerate}
		\item $\theta(0)=\{\{x_1\},\ldots,\{x_n\} \}$,
		\item there exists $t_0$ such that $\theta(t)=\{U\}$ for all $t\geq t_0$,
		\item if $r\leq s$ then $\theta(r)$ is a \emph{refinement} of $\theta(s)$, that is for every $\mathcal{B}\in\theta(r)$ there is some $\mathcal{B}'\in\theta(s)$ such that $\mathcal{B}\subseteq\mathcal{B}'$, and
		\item for all $r$ there exists $\epsilon>0$ such that $\theta(r)=\theta(t)$ for all $t\in[r,r+\epsilon]$.
	\end{enumerate}
\end{definition}

\begin{figure}[tbhp]
	\begin{subfigure}[t]{0.49\textwidth}
		\centering
		\begin{tikzpicture}[scale=0.5]
		\node (A) at (0,7) {};
		\fill (A) circle (0.2) node [left] {$a$};
		\node (B) at (2,7) {};
		\fill (B) circle (0.2) node [left] {$b$};
		\node (C) at (2,2) {};
		\fill (C) circle (0.2) node [left] {$c$};
		\node (D) at (2,1) {};
		\fill (D) circle (0.2) node [left] {$d$};
		\node (E) at (5,1) {};
		\fill (E) circle (0.2) node [left] {$e$};
		\node (F) at (5,0) {};
		\fill (F) circle (0.2) node [left] {$f$};
		\node (G) at (7,0) {};
		\fill (G) circle (0.2) node [left] {$g$};
		\end{tikzpicture}
		\caption{Data points used to compute the dendogram.}
		\label{fig_dend_univ}
	\end{subfigure}
	~
	\begin{subfigure}[t]{0.49\textwidth}
		\centering
		\begin{tikzpicture}[scale=0.5]
		\node (A) at (1,0) {};
		\node (A1) at (1,-1) {$a$};
		\fill (A) circle (0.2);
		\node (B) at (2,0) {};
		\node (B1) at (2,-1) {$b$};
		\fill (B) circle (0.2);
		\node (C) at (3,0) {};
		\node (C1) at (3,-1) {$c$};
		\fill (C) circle (0.2);
		\node (D) at (4,0) {};
		\node (D1) at (4,-1) {$d$};
		\fill (D) circle (0.2);
		\node (E) at (5,0) {};
		\node (E1) at (5,-1) {$e$};
		\fill (E) circle (0.2);
		\node (F) at (6,0) {};
		\node (F1) at (6,-1) {$f$};
		\fill (F) circle (0.2);
		\node (G) at (7,0) {};
		\node (G1) at (7,-1) {$g$};
		\fill (G) circle (0.2);
		
		\node (1) at (0,1) {$1$};
		\node (2) at (0,2) {$2$};
		\node (3) at (0,3) {$3$};
		\node (4) at (0,4) {$4$};
		\node (5) at (0,5) {$5$};
		
		\draw (A) -- (1,2) -- (2,2) -- (B);
		\draw (C) -- (3,1) -- (4,1) -- (D);
		\draw (E) -- (5,1) -- (6,1) -- (F);
		\draw (5.5,1) -- (5.5,2) -- (7,2) -- (G);
		\draw (3.5,1) -- (3.5,3) -- (6,3) -- (6,2);
		\draw (1.5,2) -- (1.5,5) -- (5,5) -- (5,3);
		\end{tikzpicture}
		\caption{The dendogram corresponding to the data points using single linkage.}
		\label{fig_dend_dend}
	\end{subfigure}
	\caption{An example of a dendogram. The distance function used is $\ell_{SL}(X,Y)\coloneqq\min_{x\in X,y\in Y}\|x-y\|$, where $\|\cdot\|$ is the Euclidean norm. This distance function is used in single linkage.}
	\label{fig_dend}
\end{figure}
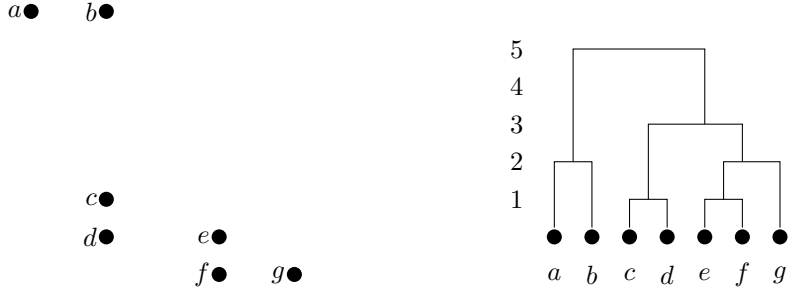

The first and second condition ensure that the trivial partitions are part of the dendogram, with the single elements having the smallest possible value and the whole set giving an upper bound.
The third condition states that every partition results from a merge of sets contained in a more refined partition.
The last condition is a bit technical, and ensures that $\theta$ is right continuous.
An example of a dendogram can be seen in \cref{fig_dend}.
We allow more than one cluster to merge in one step, as introduced and analyzed by Carlsson and Mémoli \cite{carlsson2010characterization}.
The single linkage clustering in this framework works as follows.

\begin{definition}[\cite{carlsson2010characterization}]
	Let $(U,\dist)$ be a metric space and let $\ell_{SL}\colon2^U\times2^U\rightarrow\mathbb{R}$ be the single linkage function on $U$ defined by
	\begin{equation*}
	\ell_{SL}(\mathcal{A},\mathcal{B})\coloneqq\min_{a\in\mathcal{A},b\in\mathcal{B}}\dist(a,b).
	\end{equation*}
	Inductively define a sequence of distances $R_0,R_1,\ldots,R_m\in[0,\infty)$ and a corresponding sequence of partitions $\Theta_0,\Theta_1,\ldots,\Theta_m\in\mathcal{P}(U)$ by:
	\begin{itemize}
		\item $R_0\coloneqq0$ and $\Theta_0\coloneqq\{\{x_1\},\ldots,\{x_n\}\}$, with $U=\{x_1,\ldots,x_n\}$,
		\item if $\Theta_i\neq\{U\}$, $R_{i+1}\coloneqq\min_{\substack{\mathcal{B},\mathcal{B'}\in\Theta_i\\\mathcal{B}\neq\mathcal{B'}}} \ell_{SL}(\mathcal{B},\mathcal{B'})$ and
		\item $\Theta_{i+1}\coloneqq\Theta_i/\sim_{R_{i+1}}$, where $\mathcal{B}\sim_{R_{i+1}}\mathcal{B'}$ if there exists a sequence of blocks of distance at most $R_{i+1}$, thus $\mathcal{B} = \mathcal{B}_1 ,\ldots, \mathcal{B}_s = \mathcal{B}'\in\Theta_i$ with $\ell_{SL}(\mathcal{B}_k,\mathcal{B}_{k+1})\leq R_{i+1}$, for $k=1,\ldots,s-1$ otherwise
		\item if $\Theta_i=\{U\}$, \(m\coloneqq i\).
	\end{itemize}
	Then the \emph{dendogram for single linkage} is defined by
	\begin{equation*}
	\theta^{\ell_{SL}}(r)\coloneqq\Theta_{i(r)},
	\end{equation*}
	where $i(r)\coloneqq\max\{i\mid R_i\leq r \}$.
	\label{def_sl-hc-clustering}
\end{definition}

A less technical way to describe this is, that we start with distance $R_0=0$ and the partition into single elements.
Then we inductively compute the smallest pairwise distance of any two points separated by the partition $\Theta_i$, store this as the next distance value $R_{i+1}$ and merge the corresponding sets to achieve a new partition $\Theta_{i+1}$.
We repeat this step until all sets are merged.

The result of an hierarchical clustering algorithm can also be represented in a different way, by so called ultrametrics.

\begin{definition}[Ultrametric]
	Let $(U,u)$ be a metric space. $(U,u)$ is a \emph{ultrametric space} if and only if the \emph{strong triangle inequality} holds, that is for all $x_1,x_2,x_3\in U$ it holds that
	\begin{equation*}
	\max(u(x_1,x_2),u(x_2,x_3))\geq u(x_1,x_3).
	\end{equation*}
\end{definition}

Note that the property for ultrametric spaces is a restriction of the triangle inequality for metric spaces. Indeed this property translates to all triangles being isoceles, where the third side is at most as long as the equal sides.
Carlsson and Mémoli \cite{carlsson2010characterization} showed that dendograms and ultrametrics are equivalent, and that is there is a natural bijection between the two.

\begin{theorem}[\cite{carlsson2010characterization}]
	Given a finite universe $U$, there is a bijection $\Psi\colon\mathcal{D}(U)\rightarrow\mathcal{U}(U)$ between the collection $\mathcal{D}(U)$ of all dendograms over $U$ and the collection $\mathcal{U}(U)$ of all ultrametrics over $U$ such that for any dendogram $\theta\in\mathcal{D}(U)$ the ultrametric $u=\Psi(\theta)$ generates the same hierarchical decomposition as $\theta$, that is,
	\begin{equation*}
	\text{for each } r\geq0,~x_1,x_2\in\mathcal{B}\in\theta(r) \Leftrightarrow u(x_1,x_2)\leq r.
	\end{equation*}
	Furthermore, this bijection is given by
	\begin{equation*}
	\Psi(\theta)(x,y) \coloneqq \min\{r\geq 0\mid x,y \text{ belong to the same block } \mathcal{B}\in\theta(r) \}.
	\end{equation*}
	\label{theo:dendogramm-ultrametric}
\end{theorem}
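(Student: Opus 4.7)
The plan is to first verify that the formula $\Psi(\theta)(x,y) \coloneqq \min\{r \geq 0 \mid x,y \in \mathcal{B}\in\theta(r)\}$ yields a well-defined ultrametric on $U$, then to exhibit an explicit inverse map $\Phi\colon\mathcal{U}(U)\rightarrow\mathcal{D}(U)$, and finally to check that both compositions are the identity.

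First I would show that the minimum is actually attained. The refinement condition forces $S_{x,y}\coloneqq\{r\geq 0\mid x,y\in\mathcal{B}\in\theta(r)\}$ to be upward-closed, and by condition~(2) it is non-empty, so $r^\ast\coloneqq\inf S_{x,y}$ exists. Taking a sequence $r_n\downarrow r^\ast$, right-continuity supplies an $\epsilon>0$ with $\theta$ constant on $[r^\ast,r^\ast+\epsilon]$; eventually $r_n$ lies in this interval, and $x,y$ then share a block of $\theta(r_n)=\theta(r^\ast)$, so $r^\ast\in S_{x,y}$. The equivalence $\Psi(\theta)(x,y)=0\iff x=y$ follows from condition~(1) together with the same right-continuity argument applied at $r=0$. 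Symmetry is immediate, and the strong triangle inequality reduces to the observation that if $x,y$ share a block at level $\alpha$ and $y,z$ at level $\beta$, then by refinement both pairs share blocks at level $\max(\alpha,\beta)$, so $x,z$ do as well.

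For the inverse, given $u\in\mathcal{U}(U)$ I would define the relation $x\sim_r y$ if and only if $u(x,y)\leq r$; the strong triangle inequality makes this an equivalence relation (transitivity is the only nontrivial check), and I let $\Phi(u)(r)$ be the resulting partition. The first three dendogram conditions are straightforward: at $r=0$ the classes are singletons since $u(x,y)=0\iff x=y$, the partition collapses to $\{U\}$ once $r$ exceeds the diameter, and $r\leq s$ trivially implies $\sim_r\,\subseteq\,\sim_s$. For right-continuity, finiteness of $U$ means $u$ attains only finitely many values, so between any two consecutive values the partition is constant, yielding the required $\epsilon$.

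Finally, $\Phi\circ\Psi=\mathrm{id}$ unpacks as follows: $x,y$ share a block of $\Phi(\Psi(\theta))(r)$ iff $\Psi(\theta)(x,y)\leq r$, which by the attainment step is equivalent to $x,y$ sharing a block of $\theta(s)$ for some $s\leq r$, and by refinement this is equivalent to $x,y$ sharing a block of $\theta(r)$. The composition $\Psi\circ\Phi=\mathrm{id}$ reduces to the observation that the smallest $r$ with $u(x,y)\leq r$ is $u(x,y)$ itself. The equivalence $x_1,x_2\in\mathcal{B}\in\theta(r)\iff u(x_1,x_2)\leq r$ asserted in the theorem is exactly what the $\Phi\circ\Psi=\mathrm{id}$ calculation establishes. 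The only delicate point in the whole argument is the attainment of the minimum in the definition of $\Psi$, which is where condition~(4) is genuinely used; every other step is a definition chase backed by the strong triangle inequality.
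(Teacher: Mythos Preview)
Your argument is correct and is essentially the standard proof of this result. Note, however, that the paper does not supply its own proof of this theorem: it is quoted from Carlsson and M\'emoli \cite{carlsson2010characterization} and used as a black box, so there is no in-paper proof to compare against. What you have written is precisely the argument one finds in that reference: define $\Psi$ by the displayed minimum, use right-continuity to show the minimum is attained, verify the ultrametric axioms via refinement, build the inverse from the equivalence relations $\{u\le r\}$, and check the two compositions. Your identification of condition~(4) as the one genuinely nontrivial ingredient (needed for attainment of the minimum, and hence for the equivalence $u(x_1,x_2)\le r \Leftrightarrow x_1,x_2$ share a block of $\theta(r)$) is exactly right.
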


Additionally it can be shown that ultrametrics are stable under the single linkage algorithm, that is running the single linkage algorithm on any ultrametric results in the same ultrametric.
To see this we reformulate the single linkage algorithm in terms of the resulting ultrametric $\mathfrak{T}^{SL}$.
It is easy to see that every ultrametric is a fix point of $\mathfrak{T}^{SL}$, as the ultrametric property generalizes from triangles to any cycle.

\begin{lemma}[\cite{carlsson2010characterization}]
	Let $(U,\dist)$ be a finite metric space. The ultrametric $\mathfrak{T}^{SL}(U,\dist)=(U,u_{\dist})$ resulting from the single linkage hierarchical clustering algorithm is defined by
	\begin{equation*}
	u_{\dist}(x,y)\coloneqq \min_{\substack{k\in\mathbb{N}\\ x_0=x,x_1,\ldots,x_k=y\in U}}\left\{\max_{i<k} \dist(x_i,x_{i+1}) \right\}.
	\end{equation*}
	\label{lem:stable-ultrametric}
\end{lemma}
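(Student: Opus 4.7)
The plan is to invoke \cref{theo:dendogramm-ultrametric}, which reduces the lemma to showing that
\[
u_{\dist}(x,y) \leq r \iff x,y \text{ lie in a common block of } \theta^{\ell_{SL}}(r)
\]
for every $r \geq 0$. Before doing so, I would briefly verify that $u_{\dist}$ as defined is an ultrametric: symmetry is visible in the formula, $u_{\dist}(x,x)=0$ by taking the length-zero path, $u_{\dist}(x,y)>0$ for $x\neq y$ since $\dist$ is a metric, and the strong triangle inequality is obtained by concatenating a minimizing path from $x$ to $y$ with one from $y$ to $z$, so that the largest edge on the concatenation equals the larger of the two maxima.

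For the direction \emph{path gives merging}, suppose $x = x_0, x_1, \ldots, x_k = y$ realizes $\max_i \dist(x_i, x_{i+1}) \leq r$. Because the algorithm terminates with some $\Theta_m = \{U\}$, there is a smallest stage $j$ with $R_j \geq r$; at every earlier stage the current value $R$ is below $r$. A straightforward induction on the stage shows that as soon as $R_{j'} \geq \dist(x_i, x_{i+1})$, the points $x_i$ and $x_{i+1}$ lie in a common block of $\Theta_{j'}$: the blocks containing them satisfy $\ell_{SL}(\cdot,\cdot)\le R_{j'}$, so they are identified by the one-link chain in the definition of $\sim_{R_{j'}}$. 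Transitivity of "belonging to the same block" then collapses the entire chain $x_0,\ldots,x_k$ into a single block of $\theta^{\ell_{SL}}(r)$.

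For the converse, \emph{merging gives a path}, I would induct on the minimum stage $i$ at which $x$ and $y$ lie in a common block of $\Theta_i$. The base case $i=0$ forces $x=y$, and the length-zero path suffices. Otherwise, unfolding $\Theta_i = \Theta_{i-1}/\!\sim_{R_i}$ yields a chain of blocks $\mathcal{B}_1, \ldots, \mathcal{B}_s \in \Theta_{i-1}$ with $x \in \mathcal{B}_1$, $y \in \mathcal{B}_s$, and $\ell_{SL}(\mathcal{B}_j, \mathcal{B}_{j+1}) \leq R_i \leq r$. Pick witnesses $a_j \in \mathcal{B}_j$ and $b_{j+1} \in \mathcal{B}_{j+1}$ achieving $\dist(a_j, b_{j+1}) = \ell_{SL}(\mathcal{B}_j, \mathcal{B}_{j+1})$, and set $b_1 \coloneqq x$, $a_s \coloneqq y$. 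Applying the inductive hypothesis inside each $\mathcal{B}_j$ produces a path from $b_j$ to $a_j$ whose maximum edge length is at most $R_{i-1} \leq r$; concatenating these in-block paths with the bridging edges $(a_j, b_{j+1})$ yields a walk from $x$ to $y$ with every edge of length at most $r$, as required. The main obstacle is purely bookkeeping in this inductive step, namely the correct concatenation of in-block paths with the bridges within a single merging stage; conceptually the lemma is just the well-known statement that single linkage is the bottleneck-shortest-path ultrametric, whose proof amounts to unwinding the definitions.
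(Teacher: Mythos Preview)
The paper does not give its own proof of this lemma: it is quoted from Carlsson and M\'emoli \cite{carlsson2010characterization}, and the surrounding text only adds the one-line remark that the ultrametric property ``generalizes from triangles to any cycle'' to motivate why ultrametrics are fixed points of $\mathfrak{T}^{SL}$. So there is nothing to compare against; your write-up is a self-contained argument for a result the paper simply imports.

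Your argument is correct. The reduction via \cref{theo:dendogramm-ultrametric} to the equivalence $u_{\dist}(x,y)\le r \Leftrightarrow x,y$ share a block of $\theta^{\ell_{SL}}(r)$ is exactly the content the paper later invokes (in the proof of \cref{theo:cluster-tangle}, phrased as ``the blocks of $\theta^{\ell_{SL}}(r)$ are exactly the equivalence classes $U/{\sim_r}$''). One small wording issue in your forward direction: the sentence ``there is a smallest stage $j$ with $R_j\ge r$'' is not what you actually use, and need not be literally true for large $r$. The clean statement is: whenever $x_i$ and $x_{i+1}$ still lie in distinct blocks of $\Theta_{j-1}$, the next value satisfies $R_j\le \ell_{SL}(\text{block}(x_i),\text{block}(x_{i+1}))\le \dist(x_i,x_{i+1})\le r$; hence they must already be merged in $\Theta_{i(r)}$, since otherwise $R_{i(r)+1}\le r$ would contradict the definition of $i(r)$. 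With that tightening the forward direction is immediate, and your inductive converse is fine as written.
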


We need this fact later to proof the general equivalence between dendograms and tangles.
Towards this goal we first observe, that the resulting dendogram of the single linkage algorithm can be interpreted as a decomposition of the universe into its $\mind$-tangles, where the non-singular blocks of the dendogram correspond to the tangles.
The following theorem is a precise formulation of \cref {theo:cluster-tangle-informal}.

\begin{theorem}
	Let $(U,\dist)$ be a metric space.
	\begin{enumerate}
		\item For every $r\in\mathbb{R}$ and every block $\mathcal{B}\in\theta^{\ell_{SL}}(r)$ with $|\mathcal{B}|>1$, \begin{equation*}
		\mathcal{T}\coloneqq\{X\subseteq U \mid \mind(X)<\exp(-r),~\mathcal{B}\subseteq X\}
		\end{equation*}
		is a $\mind$-tangle of $U$ of order $\exp(-r)$.
		\item For every $\mind$-tangle $\mathcal{T}$ of $U$ of order $k$ there is a block $\mathcal{B}\in\theta^{\ell_{SL}}(-\log(k))$ with $|\mathcal{B}|>1$ such that
		\begin{equation*}
		\mathcal{T}=\{X\subseteq U \mid \mind(X)<k,~\mathcal{B}\subseteq X\}.
		\end{equation*}
	\end{enumerate}
	\label{theo:cluster-tangle}
\end{theorem}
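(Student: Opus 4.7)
The plan is to exploit the correspondence between $\mind$-tangles and connected components of the graph $G_k=(U,\mind^k)$ established in \cref{cor_tangles-con-components}, together with the observation that for $k=\exp(-r)$ these components are precisely the blocks of $\theta^{\ell_{SL}}(r)$. Recall from the proof of \cref{lem:tangle-mind} that $\mind^k=\{(u,v)\mid \exp(-\dist(u,v))\geq k\}$, so an edge in $G_k$ corresponds exactly to a pair of points at distance at most $r=-\log(k)$.

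The main technical step, which I would do first, is to verify that the non-singular blocks of $\theta^{\ell_{SL}}(r)$ coincide with the non-singular connected components of $G_{\exp(-r)}$. Within any block of $\Theta_{i(r)}$, points are connected by a chain of pairwise distances at most $R_{i(r)}\leq r$ by construction of single linkage, giving one inclusion of the two equivalence relations. For the reverse direction, if two points in different blocks of $\Theta_{i(r)}$ had distance at most $r$, then $\ell_{SL}$ between their blocks would be at most $r$, forcing $R_{i(r)+1}\leq r$ and contradicting the maximality in $i(r)=\max\{i\mid R_i\leq r\}$. (Alternatively, one can deduce the identification from \cref{lem:stable-ultrametric} combined with \cref{theo:dendogramm-ultrametric}.)

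For part (1), given a block $\mathcal{B}\in\theta^{\ell_{SL}}(r)$ with $|\mathcal{B}|>1$, I verify the four tangle axioms for $\mathcal{T}$ at order $\exp(-r)$. Axiom (T.0) holds by the definition of $\mathcal{T}$ and (T.3) holds since no singleton can contain $\mathcal{B}$. For (T.1), given $X\subseteq U$ with $\mind(X)<\exp(-r)$, no edge of $G_{\exp(-r)}$ crosses between $X$ and $\overline{X}$, hence the entire component $\mathcal{B}$ lies on one side, placing $X$ or $\overline{X}$ in $\mathcal{T}$. Axiom (T.2) is immediate because every $X_i$ contains the non-empty set $\mathcal{B}$.

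For part (2), let $\mathcal{T}$ be a $\mind$-tangle of order $k$. By \cref{cor_tangles-con-components}, there is a unique connected component $C$ of $G_k$ contained in every $X\in\mathcal{T}$, and by the identification above $C$ coincides with a block $\mathcal{B}\in\theta^{\ell_{SL}}(-\log(k))$. I argue $|\mathcal{B}|>1$ as follows: since $C$ has no outgoing $G_k$-edges we have $\mind(C)<k$, so (T.1) gives $C\in\mathcal{T}$ or $\overline{C}\in\mathcal{T}$; the latter is excluded because it would force $\mathcal{B}\subseteq C\cap\overline{C}=\emptyset$, hence $C\in\mathcal{T}$, and then (T.3) forbids $|C|=1$. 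For the set equality, $\subseteq$ follows from (T.0) and the corollary; for $\supseteq$, any $X$ with $\mind(X)<k$ and $\mathcal{B}\subseteq X$ satisfies $X\in\mathcal{T}$ by (T.1), since $\overline{X}\in\mathcal{T}$ would force $\mathcal{B}\subseteq\overline{X}\cap X=\emptyset$. The only subtlety is the identification of single-linkage blocks with graph components in the first step; once that is settled, the remainder follows mechanically from the previously established corollary and tangle axioms.
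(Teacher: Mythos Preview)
Your proposal is correct and follows essentially the same approach as the paper: identify the single-linkage blocks at level $r$ with the connected components of $G_{\exp(-r)}$ (the paper cites Carlsson--M\'emoli for this, while you sketch it directly), then verify the tangle axioms for part~(1) as in \hyperref[ex:tangle-maxd]{Example~\ref*{ex:tangle-maxd}} and invoke \cref{cor_tangles-con-components} for part~(2). You are in fact more explicit than the paper in spelling out why $|\mathcal{B}|>1$ and why the set equality holds in part~(2), which the paper leaves implicit.
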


\begin{proof}
	Using the same arguments as in \cref{ex:tangle-maxd} the first statement holds.
	For the second statement one needs the equivalence relation $\sim_r$ on $U$, where $x\sim_r y$ if and only if there is a sequence of elements $x=x_1,\ldots,x_s=y\in U$ such that $\dist(x_i,x_{i+1})\leq r$.
	Carlsson and Mémoli \cite{carlsson2010characterization} have shown that the blocks of $\theta^{\ell_{SL}}(r)$ are exactly the equivalence classes $U/{\sim_r}$.
	
	Let $\mathcal{T}$ be a $\mind$-tangle of order $k$.
	Using \cref{cor_tangles-con-components} we can find a connected component $C$ in the graph $G=(U,\mind^k)$ such that $C\subseteq X$, for all $X\in\mathcal{T}$.
	Looking at the definition $\mind^k\coloneqq\{(u,v)\mid \exp(-\dist(u,v))\geq k \}$ we see that two elements $u,v\in U$ are connected in $G$ if and only if for their distance holds $\dist(u,v)\leq-\log(k)$, thus $u\sim_{-\log(k)}v$.
	It follows that the equivalence classes of $U/{\sim_{-\log(k)}}$ are exactly the same as the connected components of $G$.
	Thus, there is a block $C=\mathcal{B}\in\theta^{\ell_{SL}}(-\log(k))$ that fulfills the requirement. 
\end{proof}

As ultrametrics are stable under single linkage, one sees that any dendogram can be represented by the tangles of some maximum-submodular connectivity function. Interestingly the connection is even stronger. For every maximum-submodular function connectivity function that has been scaled to range $[0,1)$ and if all non-trivial sets are strictly positive\footnote{The restriction of the range is due to the choice of the order-reversing bijection. All non-trivial sets have to be strictly positive as dendograms need to join all data points at some finite value, therefore the partition of these data points most have a non-zero value.} there is a corresponding dendogram.

\begin{definition}
	Let $U$ be some finite universe, $\theta$ some dendogram over $U$ and $\kappa$ some maximum-submodular connectivity function over $U$. $\theta$ and $\kappa$ are \emph{\equivalent} if:
	\begin{enumerate}
		\item For every $r\in\mathbb{R}$ and every block $\mathcal{B}\in\theta(r)$ with $|\mathcal{B}|>1$, \begin{equation*}
		\mathcal{T}\coloneqq\{X\subseteq U \mid \kappa(X)<\exp(-r),~\mathcal{B}\subseteq X\}
		\end{equation*}
		is a $\kappa$-tangle of $U$ of order $\exp(-r)$.
		\item For every $\kappa$-tangle $\mathcal{T}$ of $U$ of order $k$ we can identify a block $\mathcal{B}\in\theta(-\log(k))$ with $|\mathcal{B}|>1$ such that
		\begin{equation*}
		\mathcal{T}=\{X\subseteq U \mid \kappa(X)<k,~\mathcal{B}\subseteq X\}.
		\end{equation*}
	\end{enumerate}
\end{definition}

Using this definition we are ready to reformulate \cref{theo:dendogram-tangle-informal}. The second statement has first been shown by Nathan Bowler \cite{bowler}.

\begin{theorem}
	Let $U$ be some finite universe.
	\begin{enumerate}
		\item For every dendogram $\theta$ over $U$ there is an \equivalent~ maximum-submodular connectivity function $\kappa_\theta$.
		\item For every maximum-submodular connectivity function $\kappa$ over $U$ with range $[0,1)$ and if, for all $\emptyset\neq X\neq U$, $\kappa(X)>0$, then there is an \equivalent\ dendogram $\theta_\kappa$.
	\end{enumerate}
	\label{theo:dendogram-tangle}
\end{theorem}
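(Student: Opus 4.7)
The plan is to treat the two parts separately, leveraging the single linkage correspondence already proved for $\mind$ in \cref{theo:cluster-tangle}.

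For part (1), I would proceed through the ultrametric attached to $\theta$. Given the dendogram $\theta$, let $u \coloneqq \Psi(\theta)$ be its associated ultrametric from \cref{theo:dendogramm-ultrametric}, and define $\kappa_\theta$ to be the maximum linkage function $\mind$ of the metric space $(U, u)$. Since $u$ is a metric, $\kappa_\theta$ is a maximum-submodular connectivity function, and its range lies in $[0,1)$ because $u(x,y) > 0$ for $x \neq y$ on a finite space. The crucial point is stability of ultrametrics under single linkage (\cref{lem:stable-ultrametric}): running single linkage on $(U, u)$ returns $u$ itself, which via \cref{theo:dendogramm-ultrametric} yields exactly the dendogram $\theta$. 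Applying \cref{theo:cluster-tangle} to $(U, u)$ then is precisely the statement that $\theta$ and $\kappa_\theta$ are equivalent.

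For part (2), my candidate is to set $\theta_\kappa(r)$ to be the partition of $U$ into the connected components of the graph $G_{\exp(-r)} \coloneqq (U, \kappa^{\exp(-r)})$ from \cref{cor_tangles-con-components}. I would first verify the four dendogram axioms: (i) $\theta_\kappa(0)$ is the partition into singletons because $\kappa(X) < 1$ for every $X$, so $G_1$ has no edges; (ii) letting $m \coloneqq \min_{\emptyset \neq X \neq U} \kappa(X)$, which is strictly positive by hypothesis and finiteness of $U$, for every $k \leq m$ every non-trivial $X$ satisfies $\kappa(X) \geq k$, hence $G_k$ is complete and $\theta_\kappa(r) = \{U\}$ for $r \geq -\log m$; (iii) refinement follows because $k \leq k'$ gives $\kappa^{k'} \subseteq \kappa^k$, so components split as $r$ decreases; (iv) right-continuity is immediate, since $\kappa$ takes only finitely many values, so $G_{\exp(-r)}$ is locally constant to the right of each $r$.

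For equivalence of $\theta_\kappa$ with $\kappa$, the forward direction is direct: a non-singleton component $C \in \theta_\kappa(r)$ yields the prescribed $\mathcal{T}$, with (T.0) immediate, (T.1) following because any $X$ with $\kappa(X) < \exp(-r)$ cannot separate a $\kappa^{\exp(-r)}$-edge, so $C \subseteq X$ or $C \subseteq \overline{X}$, and (T.2), (T.3) following from $|C|>1$. For the backward direction, I would invoke \cref{lem:tangle-mind} and \cref{cor_tangles-con-components} to obtain the unique component $C$ contained in every $X \in \mathcal{T}$ with $|C| > 1$; the reverse inclusion $\{X : \kappa(X) < k,\ C \subseteq X\} \subseteq \mathcal{T}$ then uses (T.1), because $\overline{X} \in \mathcal{T}$ would force $C \subseteq \overline{X}$, contradicting $C \subseteq X$ and $|C| > 1$. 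The main obstacle, such as it is, is bookkeeping around the boundary values: the range restriction to $[0,1)$ is what forces $\theta_\kappa(0)$ to consist of singletons, and the strict positivity assumption on non-trivial sets is what guarantees that $\theta_\kappa$ actually reaches $\{U\}$ and that every tangle order falls within the regime where \cref{cor_tangles-con-components} applies cleanly.
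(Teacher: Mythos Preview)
Your argument for part (1) is exactly the paper's: pass to the ultrametric $u=\Psi(\theta)$, use stability of ultrametrics under single linkage to get $\theta^{\ell_{SL}}=\theta$, and apply \cref{theo:cluster-tangle}.

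For part (2) you take a genuinely different route. The paper does not build $\theta_\kappa$ directly; instead it defines an ultrametric
\[
u_\kappa(x,y)\coloneqq\max_{\substack{X\subseteq U\\ x\in X,\ y\in\overline{X}}}\bigl(-\log\kappa(X)\bigr)
\]
and proves two claims: that $u_\kappa$ is an ultrametric, and that the maximum linkage function of $(U,u_\kappa)$ satisfies $\delta_{u_\kappa}=\kappa$. Equivalence then falls out of \cref{theo:cluster-tangle} applied to $(U,u_\kappa)$, with $\theta_\kappa$ being the single linkage dendogram of $u_\kappa$. Your approach skips the ultrametric entirely and defines $\theta_\kappa(r)$ as the component partition of $G_{\exp(-r)}$, verifying the dendogram axioms and the two equivalence directions by hand via \cref{lem:tangle-mind} and \cref{cor_tangles-con-components}. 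This is correct and arguably more transparent, since it stays within the $G_k$ machinery already set up in \cref{sec_mind}. What the paper's detour buys is a stronger structural statement: every maximum-submodular connectivity function with range in $[0,1)$ and strictly positive on non-trivial sets is \emph{literally} the maximum linkage function of some ultrametric. Your construction yields the same dendogram (the components of $G_k$ coincide with the $u_\kappa$-balls), but this representation of $\kappa$ as a $\delta_u$ is not made explicit.
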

\begin{proof}
	~
	\begin{enumerate}
		\item Let \(\theta\) be a dendogramm over some finite universe \(U\) and let \(u\coloneqq\Psi(\theta)\) be the ultrametric as in \cref{theo:dendogramm-ultrametric}. Using \cref{lem:stable-ultrametric} we see that applying the single linkage hierarchical clustering algorithm to \((U,u)\) results in the same ultrametric space, that is $\mathfrak{T}^{SL}(U,u)=(U,u)$, and therefore also \(\theta^{\ell_{SL}}=\theta\). Thus the statement follows by applying \cref{theo:cluster-tangle} to the (ultra-)metric space \((U,u)\).
		\item We show this statement by proving that there exists an ultrametric $u_\kappa$ such that $\minsome_{u_\kappa} = \kappa$. Then the statement follows from \cref{theo:cluster-tangle}.
		W.l.o.g. we have $\kappa(X)>0$, for all $\emptyset\neq X \neq U$. We define a function $u_\kappa\colon U^2\rightarrow \mathbb{R}^+$ as follows:
		\begin{equation*}
		u_\kappa(x,y) \coloneqq \begin{cases}
		\max_{\substack{X\subseteq U\\x\in X\\ y\in\overline{X}}} -\log(\kappa(X)) 
		& \text{if } x \neq y,\\
		0 & \text{else}.
		\end{cases}
		\end{equation*}
		\begin{claim}
			$u_\kappa$ is an ultra metric.
			\label{cl:ultrametric}
		\end{claim}
		\begin{claimproof}
			$u_\kappa$ is positive (as all values of $\kappa$ are in $[0,1)$), symmetric and the identity of indescernibles holds by definition.
			It remains to show, that for all $x,y,z\in U$ the strong triangle inequality holds.
			If $x=z$ the strong triangle inequality holds due to symmetry.
			Thus assume $x\neq z$.
			Let $X\subset U$ be chosen to minimize $\kappa(X)$ with restriction to $x\in X$ and $z\in\overline{X}$.
			Then we have $u_\kappa(x,z)=-\log(\kappa(X))$.
			First assume $y\in X$, then $u_\kappa(y,z)\geq-\log(\kappa(X))=u_\kappa(x,z)$.
			If on the other hand $y\in\overline{X}$, then we get $u_\kappa(x,y)\geq-\log(\kappa(X))=u_\kappa(x,z)$.
			In both cases we get $\max(u_\kappa(x,y),u_\kappa(y,z))\geq u_\kappa(x,z)$, thus $u_\kappa$ is an ultrametric.
		\end{claimproof}
		\begin{claim}
			$\minsome_{u_\kappa} = \kappa$.
			\label{cl:kappa}
		\end{claim}
		\begin{claimproof}
			First we show $\minsome_{u_\kappa}(X) \leq \kappa(X)$, for all $X\subseteq U$.
			Let $X\subseteq U$ be an arbitrary set.
			If $X = \emptyset$ or $X = U$, then $\minsome_{u_\kappa}(X)=0=\kappa(X)$.
			Otherwise let $x\in X$ and $y\in \overline{X}$.
			By definition $u_\kappa(x,y)\geq -\ln(\kappa(X))$ and therefore $\exp(-u_\kappa(x,y)) \leq \kappa(X)$.
			Thus $\minsome_{u_\kappa}(X)\leq \kappa(X)$.
			
			Next we show that $\minsome_{u_\kappa}(X) \geq \kappa(X)$, for all $X\subseteq U$, which concludes the proof.
			If $X\in\{\emptyset, U\}$, then $\kappa(X)=0\leq\minsome_{u_\kappa}$.
			Otherwise let $k\coloneqq\minsome_{u_\kappa}(X)$.
			For any $x\in X$, $y\in \overline{X}$ we have $k=\minsome_{u_\kappa}(X)\geq \exp(-u_\kappa(x,y))$ by definition of $\minsome_{u_\kappa}$, thus $u_\kappa(x,y) \geq -\log(k)$.
			By definition of $u_\kappa$, there is a set $X_{x,y}$ with $x\in X_{x,y}$ and $y\in\overline{X_{x,y}}$, such that $-\log(\kappa(X_{x,y})) \geq -\log(k)$, thus $\kappa(X_{x,y})\leq k$.
			For all $x\in X$ we define $X_x\coloneqq\bigcap_{y\in\overline{X}} X_{x,y}$.
			We have $\kappa(X_x)\leq k$ by the maximum-submodularity of $\kappa$.
			It is easy to see that $X_x\cap \overline{X}=\emptyset$.
			Thus we get $X=\bigcup_{x\in X}X_x$ and again by the maximum-submodularity of $\kappa$ we get $\kappa(X)\leq k = \minsome_{u_\kappa}(X)$.
		\end{claimproof}
		
		Combining \cref{cl:ultrametric} and \cref{cl:kappa} proofs the lemma.
	\end{enumerate}
\end{proof}

From \cref{theo:dendogram-tangle} follows that for every hierarchical clustering algorithm there exists some maximum-submodular connectivity function such that the tangles describe the resulting clusters.
But in contrast to the single linkage algorithm, as seen in \cref{theo:cluster-tangle}, we do not get a constructive result for arbitrary algorithms, that is we can only compute the connectivity function from the clustering results and not directly from the data points or the corresponding metric space.

\begin{remark}
	Let us consider two popular hierarchical clustering algorithms, average linkage and complete linkage.
	The algorithm is the same as in \cref{def_sl-hc-clustering}, but the linkage function $\ell$ changes.
	The distance of two sets for complete linkage equals the maximum distance of any point from one set to any point from the other set, thus $\ell_{CL}(X,Y)\coloneqq\max_{x\in X,y\in Y}\dist(x,y)$.
	Using the same trick as for single linkage, a natural related connectivity function is $\kappa_{\dist}(X)\coloneqq\min_{x\in X,y\in\overline{X}}\exp(-\dist(x,y))$, for $X\in2^U\backslash\{\emptyset,U\}$, and $\kappa_{\dist}=0$, otherwise. This function is maximum-submodular and using the Manhattan distance it is even submodular.
	But in general, for an arbitrary partition $P$, it holds that
	\begin{equation*}
	\min_{X,Y\in P}\max_{x\in_X,y\in Y} \dist(x,y)\neq-\log(\max_{X\in P}\min_{x\in X,x'\in\overline{X}}\exp(-\dist(x,x'))),
	\end{equation*}
	as $\ell_{CL}(X,\overline{X})=\max_{Y\in P}\ell_{CL}(X,Y)$, for arbitrary $X\in P$.
	This is different for single linkage.
	It holds that for any partition $P$ of the universe we have
	\begin{equation*}
	\min_{X,Y\in P}\ell_{SL}(X,Y)=-\log(\max_{X\in P}\mind(X)),
	\end{equation*}
	as $\ell_{SL}(X,\overline{X})=\min_{Y\in P}\ell_{SL}(X,Y)$, for arbitrary $X\in P$.
	Thus in contrast to single linkage, the optimum of complete linkage $\ell_{CL}$ used to compute $R_{i+1}$ does not correspond to the optimum according to the connectivity function $\kappa_{\dist}$.
	For average linkage ($\ell_{AL}(X,Y)\coloneqq\sum_{x\in X,y\in Y}\frac{\dist(x,y)}{|X||Y|}$), a corresponding set function could be $\varphi_{\dist}(X)\coloneqq\sum_{x\in X,y\in\overline{X}}\frac{\exp(-\dist(x,y))}{|X||\overline{X}|}$, for $X\in 2^U\backslash\{\emptyset,U \}$, and $\varphi_{\dist}=0$, otherwise.
	It is neither submodular nor maximum-submodular.
	Additionally in general $\ell_{AL}(X,\overline{X})$ is not directly computable from $\ell_{AL}(X,Y)$, for $X,Y\in P$ with $P$ an arbitrary partition.
	To compute $\ell_{AL}(X,\overline{X})$, also the size of all $Y\in P$ is needed.
	We have
	\begin{equation*}
	\ell_{AL}(X,\overline{X})=\frac{\sum_{Y\in P,X\neq Y}|Y|\ell_{AL}(X,Y)}{\sum_{Y\in P,X\neq Y} |Y|}.
	\end{equation*}
	This makes it even harder to find a suitable connectivity function.
\end{remark}

\section{Conclusion}

We establish a precise technical equivalence between tangles and hierarchical clustering.
We show that the result of any hierarchical clustering algorithm is in one-to-one correspondence with the tangles of a maximum-submodular connectivity function.
We can specify this connection for the \mindistftk\ and single linkage clustering, where we are able to define the equivalent connectivity function directly from the metric space of the data points.
It is still an open question if for other clustering algorithms we can find corresponding set functions directly from the metric space of the data.
One of the main obstacles here is, that tangles and the corresponding set functions only look at global connectivity of some set to its converse whereas hierarchical clustering looks at local connectivity between two sets.
For single linkage these two notions turned out to be the same.

Our second contribution is to show duality between tangles and branch decompositions for a new class of functions.
In our view, the key transformation in the proof of the Duality Theorem is the Exactness Lemma (related to „shifting“ in \cite{Diestel_Oum_2014}); this is where submodularity or maximum-submodularity comes in.
To broaden the theory, it will be essential to understand sufficient and necessary conditions such that these transformations are possible.

\bibliography{literatur}

\begin{thebibliography}{10}

\bibitem{adler2007}
{\sc I.~Adler, G.~Gottlob, and M.~Grohe}, {\em Hypertree width and related
  hypergraph invariants}, European Journal of Combinatorics, 28 (2007),
  pp.~2167--2181, \url{https://doi.org/10.1016/j.ejc.2007.04.013}.

\bibitem{bowler}
{\sc N.~Bowler}, {\em Personal communication}.
\newblock Hamburg, 2020.

\bibitem{carlsson2010characterization}
{\sc G.~Carlsson and F.~Mémoli}, {\em Characterization, stability and
  convergence of hierarchical clustering methods}, Journal of Machine Learning
  Research, 11 (2010), pp.~1425--1470.

\bibitem{addad2017objective}
{\sc V.~Cohen-Addad, V.~Kanade, F.~Mallmann-Trenn, and C.~Mathieu}, {\em
  Hierarchical clustering: Objective functions and algorithms}, in Proceedings
  of the Twenty-Ninth Annual {ACM-SIAM} Symposium on Discrete Algorithms,
  {SODA} 2018, New Orleans, LA, USA, January 7-10, 2018, A.~Czumaj, ed.,
  {SIAM}, 2018, pp.~378--397, \url{https://doi.org/10.1137/1.9781611975031.26}.

\bibitem{dasgupta16hcoptimization}
{\sc S.~Dasgupta}, {\em A cost function for similarity-based hierarchical
  clustering}, in Proceedings of the 48th Annual {ACM} {SIGACT} Symposium on
  Theory of Computing, {STOC} 2016, Cambridge, MA, USA, June 18-21, 2016,
  D.~Wichs and Y.~Mansour, eds., {ACM}, 2016, pp.~118--127,
  \url{https://doi.org/10.1145/2897518.2897527},
  \url{https://doi.org/10.1145/2897518.2897527}.

\bibitem{diestel2016}
{\sc R.~Diestel, F.~Hundertmark, and S.~Lemanczyk}, {\em Profiles of
  separations: in graphs, matroids, and beyond}, Combinatorica, 39 (2019),
  pp.~37--75, \url{https://doi.org/10.1007/s00493-017-3595-y}.

\bibitem{Diestel_Oum_2014}
{\sc R.~Diestel and S.~Oum}, {\em Unifying duality theorems for width
  parameters in graphs and matroids (extended abstract)}, in Graph-Theoretic
  Concepts in Computer Science - 40th International Workshop, {WG} 2014,
  Nouan-le-Fuzelier, France, June 25-27, 2014. Revised Selected Papers,
  D.~Kratsch and I.~Todinca, eds., vol.~8747 of Lecture Notes in Computer
  Science, Springer, 2014, pp.~1--14,
  \url{https://doi.org/10.1007/978-3-319-12340-0\_1}.

\bibitem{diestel2016tangles}
{\sc R.~Diestel and G.~Whittle}, {\em Tangles and the mona lisa}, ArXiv,
  (2016).
\newblock arXiv:1603.06652 [math.CO].

\bibitem{elbfioklekneirendteelux20}
{\sc C.~Elbracht, D.~Fioravanti, S.~Klepper, J.~Kneip, L.~Rendsburg, M.~Teegen,
  and U.~von Luxburg}, {\em Tangles: From weak to strong clustering}, CoRR,
  abs/2006.14444 (2020), \url{https://arxiv.org/abs/2006.14444},
  \url{https://arxiv.org/abs/2006.14444}.

\bibitem{fluck2019}
{\sc E.~Fluck}, {\em Tangles and single linkage hierarchical clustering}, in
  44th International Symposium on Mathematical Foundations of Computer Science,
  {MFCS} 2019, August 26-30, 2019, Aachen, Germany, P.~Rossmanith,
  P.~Heggernes, and J.~Katoen, eds., vol.~138 of LIPIcs, Schloss Dagstuhl -
  Leibniz-Zentrum f{\"{u}}r Informatik, 2019, pp.~38:1--38:12,
  \url{https://doi.org/10.4230/LIPIcs.MFCS.2019.38},
  \url{https://doi.org/10.4230/LIPIcs.MFCS.2019.38}.

\bibitem{geelen}
{\sc J.~Geelen, B.~Gerards, and G.~Whittle}, {\em Tangles, tree-decompositions
  and grids in matroids}, Journal of Combinatorial Theory, Series B, 99 (2009),
  pp.~657--667, \url{https://doi.org/10.1016/j.jctb.2007.10.008}.

\bibitem{grohe2016}
{\sc M.~Grohe}, {\em Tangled up in blue (a survey on connectivity,
  decompositions, and tangles)}, ArXiv,  (2016).
\newblock arXiv:1605.06704 [cs.DM].

\bibitem{kleinberg2003impossibility}
{\sc J.~M. Kleinberg}, {\em An impossibility theorem for clustering}, in
  Advances in Neural Information Processing Systems 15 [Neural Information
  Processing Systems, {NIPS} 2002, December 9-14, 2002, Vancouver, British
  Columbia, Canada], S.~Becker, S.~Thrun, and K.~Obermayer, eds., {MIT} Press,
  2002, pp.~446--453.

\bibitem{minorsX}
{\sc N.~Robertson and P.~D. Seymour}, {\em Graph minors. x. obstructions to
  tree-decomposition}, Journal of Combinatorial Theory, Series B, 52 (1991),
  pp.~153--190, \url{https://doi.org/10.1016/0095-8956(91)90061-N}.

\bibitem{luxburg2007spectral}
{\sc U.~von Luxburg}, {\em A tutorial on spectral clustering}, Statistics and
  Computing, 17 (2007), pp.~395--416,
  \url{https://doi.org/10.1007/s11222-007-9033-z}.

\end{thebibliography}
\bibliographystyle{siamplain}

\appendix

\end{document}